\newtheorem{thm}{Theorem}
\newtheorem{lem}{Lemma}
\newtheorem{coro}{Corollary}
\newtheorem{defi}{Definition}
\newtheorem{rem}{Remark}
\begin{document}

\title{On the Secure Degrees-of-Freedom of the Multiple-Access-Channel}

\author{Ghadamali Bagherikaram, Abolfazl S. Motahari, Amir K. Khandani\\
Coding and Signal Transmission Laboratory,
 Department of
Electrical
and Computer Engineering,\\
 University of Waterloo, Ontario,
 N2L 3G1,
 Emails: \{gbagheri,abolfazl,khandani\}@cst.uwaterloo.ca
}
 \maketitle
\footnotetext{Financial support provided by Nortel, and the
corresponding matching funds by the Natural Sciences and Engineering
Research Council of Canada (NSERC), and Ontario Ministry of Research
\& Innovation (ORF-RE) are gratefully acknowledged.}
\begin{abstract}
A $K$-user secure Gaussian Multiple-Access-Channel (MAC) with an
external eavesdropper is considered in this paper. An achievable
rate region is established for the secure discrete memoryless MAC.
The secrecy sum capacity of the degraded Gaussian MIMO MAC is proven
using Gaussian codebooks. For the non-degraded Gaussian MIMO MAC, an
algorithm inspired by interference alignment technique is proposed
to achieve the largest possible total Secure-Degrees-of-Freedom
(S-DoF). When all the terminals are equipped with a single antenna,
Gaussian codebooks have shown to be inefficient in providing a
positive S-DoF. Instead, a novel secure coding scheme is proposed to
achieve a positive S-DoF in the single antenna MAC. This scheme
converts the single-antenna system into a multiple-dimension system
with fractional dimensions. The achievability scheme is based on the
alignment of signals into a small sub-space at the eavesdropper, and
the simultaneous separation of the signals at the intended receiver.
Tools from the field of Diophantine Approximation in number theory
are used to analyze the probability of error in the coding scheme.
It is proven that the total S-DoF of $\frac{K-1}{K}$ can be achieved
for almost all channel gains. For the other channel gains, a
multi-layer coding scheme is proposed to achieve a positive S-DoF.
As a function of channel gains, therefore, the achievable S-DoF is
discontinued.
\end{abstract}
\section{Introduction}
The notion of information theoretic secrecy in communication systems
was first introduced by Shannon in \cite{1}. The information
theoretic secrecy requires that the received signal of the
eavesdropper not provide any information about the transmitted
messages. Following the pioneering works of Wyner \cite{2} and
Csiszar et. al. \cite{3} which studied the wiretap channel, many
multi-user channel models have been considered from a perfect
secrecy point of view. The secrecy capacity rate has been
established for some basic models, including the Gaussian wiretap
channel \cite{4}, the MIMO wiretap channel \cite{5,6,7,8,9}, and the
fading wiretap channel \cite{10,11}. Additionally, the secrecy
capacity rate regions have been characterized successfully for some
multi-user channels, including the MIMO Gaussian broadcast channel
with confidential messages \cite{12,13,14,15,16}.

The secure Gaussian MAC with/without an external eavesdropper is
introduced in \cite{17,18,19,20}. The secure Gaussian MAC with an
external eavesdropper consists of an ordinary Gaussian MAC and an
external eavesdropper. The capacity region of this channel is still
an open problem in the information theory field. For this channel,
an achievable rate scheme based on Gaussian codebooks is proposed in
\cite{20}, and  the sum secrecy capacity of the degraded Gaussian
channel is found in \cite{18}. For some special cases, upper bounds,
lower bounds, and some asymptotic results on the secrecy capacity
exist; see for example \cite{21,22,23,24}. For the achievability
part, Shannon's random coding argument has proven to be effective in
these works.

The secure MAC generalizes the wiretap channel. In the wiretap
channel, the direct coding scheme uses the framework of random
coding, which is widely used in the analysis of multi-terminal
source and channel coding problems. One approach to find achievable
sum rates for the secure MAC is to extend the random coding solution
to the secure multi-user. As we will show in this paper, this
extension leads to a single-letter characterization for the secure
rate region of the MAC. Our achievability, as usual, is based on the
i.i.d random binning scheme.

On the other hand, it is shown that the random coding argument may
be insufficient to prove capacity theorems for certain channels;
instead, structure codes can be used to construct efficient channel
codes for Gaussian channels. In reference \cite{25}, nested lattice
codes are used to provide secrecy in two-user Gaussian channels. In
\cite{25} it is shown that structure codes can achieve a positive
S-DoF in a two-user MAC. In particular, the achievability scheme of
\cite{25} provides an S-DoF of $\frac{1}{2}$ for a small category of
channel gains and for the other categories, it provides a S-DoF of
strictly less than $\frac{1}{2}$.

In reference \cite{26}, the concept of interference alignment is
introduced and has illustrated its capability in achieving the full
DoF of a class of two-user X channels. In reference \cite{27}, and
\cite{28}, a novel coding scheme applicable in networks with single
antenna nodes is proposed. This scheme converts a single antenna
system into an equivalent Multiple Input Multiple Output (MIMO)
system with fractional dimensions.

In this work \footnotetext{Please see the conference version of this
work in \cite{29}.} we establish the secrecy sum capacity of the
degraded Gaussian MIMO MAC using random bininng of Gaussian
codebooks. For the non-degraded channel, we present an algorithm
inspiring the notion of signal alignment to achieve the largest
S-DoF by using Gaussian codebooks. We then use the notion of
\emph{real alignment} of \cite{27} to prove that for almost all
channel gains in the secure $K$ user single-antenna Gaussian MAC, we
can achieve the S-DoF of $\frac{K-1}{K}$. Here, our scheme uses
structure codes instead of Gaussian codebooks. In the case of the
channel gains for which the S-DoF of $\frac{K-1}{K}$ cannot be
achieved, we propose a multi-layer coding scheme to achieve a
positive S-DoF. The scheme of this work differs from that of
\cite{25}, in the sense that our scheme achieves the S-DoF of
$\frac{1}{2}$ for almost all channel gains, while reference
\cite{25} provides a scheme that achieves the S-DoF of $\frac{1}{2}$
(when $K=2$) for  specific channel gains, i.e., algebraic irrational
gains. Therefore, we prove that the carve of S-DoF versus channel
gains is almost certainly constant with many discontinuations.

The rest of the paper is organized as follows: Section II provides
some background and preliminaries. In section III, we consider the
secure Gaussian MIMO MAC and establish the secrecy sum rate for the
degraded channel. In this section, we propose an algorithm to
achieve the largest possible value of S-DoF for the non-degraded
channel model. We present our results for the achievable S-DoF of
the single-antenna MAC in section IV. Finally, section V concludes
the paper.

\section{Preliminaries}
Consider a secure $K$-user Gaussian MIMO Multiple-Access-Channel
(MAC) as depicted in Fig. \ref{f1}.
\begin{figure}
\centerline{\includegraphics[scale=1.1]{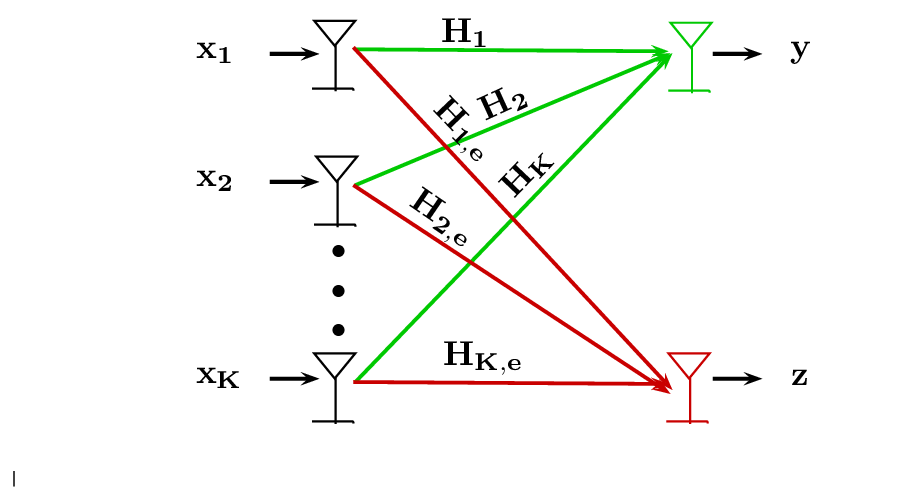}} \caption{Secure $K$-user Gaussian MIMO Multiple-Access-Channel} \label{f1}
\end{figure}
In this confidential setting, each user $k$ ($k\in
\mathcal{K}\stackrel{\triangle}{=}\{1,2,...,K\}$) wishes to send a
message $W_{k}$ to the intended receiver in $n$ uses of the channel
simultaneously, and prevent the eavesdropper from having any
information about the messages. At a specific time, the signals
received by the intended receiver and the eavesdropper is given by
\begin{IEEEeqnarray}{lr}\label{eq0}
\mathbf{y}&=\sum_{k=1}^{K}\mathbf{H_{k}}\mathbf{x_{k}}+\mathbf{n_{1}}\\ \nonumber
\mathbf{z}&=\sum_{k=1}^{K}\mathbf{H_{k,e}}\mathbf{x_{k}}+\mathbf{n_{2}}
\end{IEEEeqnarray}
where
\begin{itemize}
  \item $\mathbf{x_{k}}$ for $k=1,2,...,K$ is a real input vector of size $M_{k}\times 1$ under an input average power constraint. We require that $Tr(\mathbf{Q_{k}})\leq P$, where $\mathbf{Q_{k}}=E[\mathbf{x_{k}}\mathbf{x_{k}^{\dag}}]$. Here, the superscript
$\dag$ denotes the Hermitian transpose of a vector and $Tr(.)$ denotes the Trace operator on the  matrices.
  \item $\mathbf{y}$ and $\mathbf{z}$ are real output vectors  which are received by the destination and the eavesdropper, respectively. These are vectors of size $N \times 1$ and $N_{e} \times 1$, respectively.
  \item $\mathbf{H_{k}}$ and $\mathbf{H_{k,e}}$ for $k=1,2,...,K$ are fixed, real gain matrices which model the channel gains between the transmitters and the intended receiver, and the eavesdropper, respectively. These are matrices of size $N \times M_{k}$ and $N_{e} \times M_{k}$, respectively. The channel state information is assumed to be known perfectly at all the transmitters and at all receivers.
  \item $\mathbf{n_{1}}$ and $\mathbf{n_{2}}$  are real Gaussian random vectors with zero means and covariance matrices $\mathbf{N_{1}}=E[\mathbf{n_{1}}\mathbf{n_{1}}^{T}]=\mathbf{I_{N}}$ and $\mathbf{N_{2}}=E[\mathbf{n_{2}}\mathbf{n_{2}}^{T}]=\mathbf{I_{N_{e}}}$, respectively. Here, $\mathbf{I_{M}}$ represents the identity matrix of size $M\times M$.
\end{itemize}
Let $\mathbf{x_{k}^{n}}$, $\mathbf{y^{n}}$ and $\mathbf{z^{n}}$
denote the random channel inputs and random channel outputs matrices
over a block of $n$ samples. Furthermore, let $\mathbf{n_{1}^{n}}$,
and $\mathbf{n_{2}^{n}}$ denote the additive noises of the channels.
Therefore, we have
\begin{IEEEeqnarray}{rl}
\mathbf{y^{n}}&=\sum_{k=1}^{K}\mathbf{H_{k}}\mathbf{x_{k}^{n}}+\mathbf{n_{1}^{n}}\\
\nonumber
\mathbf{z^{n}}&=\sum_{k=1}^{K}\mathbf{H_{k,e}}\mathbf{x_{k}^{n}}+\mathbf{n_{2}^{n}}.
\end{IEEEeqnarray}
Note that bold vectors are random while the matrices $\mathbf{H_{k}}$ and $\mathbf{H_{k,e}}$ are deterministic matrices for all $k\in\mathcal{K}$. The columns of
$\mathbf{n_{1}^{n}}$ and $\mathbf{n_{2}^{n}}$ are independent Gaussian random vectors with covariance matrices $\mathbf{I_{N}}$ and $\mathbf{I_{N_{e}}}$, respectively. In
addition $\mathbf{n_{1}^{n}}$ and $\mathbf{n_{2}^{n}}$ are independent of $\mathbf{x_{k}^{n}}$'s and $W_{k}$'s. A
$((2^{nR_{1}},2^{nR_{2}},...,2^{nR_{k}}),n)$ secret code for the above channel consists of
the following components:

\emph{1}) $K$ secret message sets $\mathcal{W}_{k}=\{1,2,...,2^{nR_{k}}\}$.

\emph{2}) $K$  stochastic encoding functions $f_{k}(.)$ which map the secret messages to the transmitted symbols, i.e., $f_{k}: w_{k}\rightarrow \mathbf{x_{k}^{n}}$ for each $w_{k}\in\mathcal{W}_{k}$. At encoder $k$, each codeword is designed according to the transmitter's average power constraint $P$.

\emph{3}) A decoding function $\phi(.)$ which maps the received
symbols to estimate the messages: $\phi(\mathbf{y^{n}}) \rightarrow
(\hat{W_{1}},...,\hat{W_{K}}) $.

The reliability of the transmission is measured by the average
probability of error, which is defined as the probability that the
decoded messages are not equal to the transmitted messages; that is,
\begin{equation}
P_{e}^{(n)}=\frac{1}{\prod_{k=1}^{K}2^{nR_{k}}}\sum_{(w_{1},...,w_{K})\in\mathcal{W}_{1}\times....\times\mathcal{W_{K}}}P(\phi(\mathbf{y^{n}})\neq (w_{1},...,w_{K})|(w_{1},...,w_{k})~ \hbox{is sent}).
\end{equation}
The secrecy level is measured by the normalized equivocation defined
as follows: The normalized equivocation for each subset of messages
$W_{\mathcal{S}}$ for $\mathcal{S}\subseteq\mathcal{K}$ is
\begin{equation}
\Delta_{\mathcal{S}}\stackrel{\triangle}{=}\frac{H(W_{\mathcal{S}}|\mathbf{z^{n}})}{H(W_{\mathcal{S}})}.
\end{equation}
The rate-equivocation tuple $(R_{1},...,R_{K},d)$ is said to be
achievable for the Gaussian MIMO Multiple-Access-Channel with
confidential messages, if for any $\epsilon>0$, there exists a
sequence of $((2^{nR_{1}},...,2^{nR_{K}}),n)$ secret codes, such
that for sufficiently large $n$,
\begin{equation}
P_{e}^{(n)}\leq\epsilon,
\end{equation}
and
\begin{equation}
\Delta_{\mathcal{S}}\geq d-\epsilon,~~~~~~~\forall\mathcal{S}\subseteq\mathcal{K}.
\end{equation}
The perfect secrecy rate tuple $(R_{1},...,R_{K})$ is said to be achievable when $d=1$. When all the transmitted messages are perfectly secure, we have
\begin{equation}\label{eq1}
\Delta_{\mathcal{K}}\geq 1-\epsilon,
\end{equation}
or equivalently
\begin{equation}
H(W_{\mathcal{K}}|\mathbf{z^{n}})\geq H(W_{\mathcal{K}})-\epsilon H(W_{\mathcal{K}}).
\end{equation}
The normalized equivocation of each subset of messages can then be
written as follows:
\begin{IEEEeqnarray}{rl}
H(W_{\mathcal{S}}|\mathbf{z^{n}})&\stackrel{(a)}{=}H(W_{\mathcal{S}},W_{\mathcal{S}^{c}}|\mathbf{z^{n}})-H(W_{\mathcal{S}^{c}}|W_{\mathcal{S}},\mathbf{z^{n}})\\ \nonumber &=
H(W_{\mathcal{K}}|\mathbf{z^{n}})-H(W_{\mathcal{S}^{c}}|W_{\mathcal{S}},\mathbf{z^{n}})\\ \nonumber &\stackrel{(b)}{\geq}H(W_{\mathcal{K}})-\epsilon H(W_{\mathcal{K}})-H(W_{\mathcal{S}^{c}}|W_{\mathcal{S}},\mathbf{z^{n}})\\ \nonumber &\stackrel{(c)}{=}H(W_{\mathcal{S}})+H(W_{\mathcal{S}^{c}}|W_{\mathcal{S}})-\epsilon H(W_{\mathcal{K}})-H(W_{\mathcal{S}^{c}}|W_{\mathcal{S}},\mathbf{z^{n}})\\ \nonumber &\stackrel{(d)}{\geq}H(W_{\mathcal{S}})-\epsilon H(W_{\mathcal{K}}),
\end{IEEEeqnarray}
where $(a)$ and $(c)$ follow from the chain rule, $(b)$ follows from (\ref{eq1}) and $(d)$ follows from the fact that conditioning always decreases the amount of entropy.
Therefore, the normalized equivocation of each subset of messages is
\begin{equation}
\Delta_{\mathcal{S}}\geq 1-\epsilon^{'},
\end{equation}
where
$\epsilon^{'}=\frac{H(W_{\mathcal{K}})}{H(W_{\mathcal{S}})}\epsilon$.
Thus, when all of the $K$ messages are perfectly secure then it is
guaranteed that any subset of the messages becomes perfectly secure.

The total Secure Degrees-of Freedom (S-DoF) of $\eta$ is said to be
achievable, if the rate-equivocation tuple $(R_{1},...,R_{K},d=1)$
is achievable, and
\begin{equation}\label{sdf}
\eta=\lim_{P\rightarrow\infty}\frac{\sum_{k=1}^{K}R_{k}}{\frac{1}{2}\log P}
\end{equation}
\section{Secure DoF of the Multiple-Antenna Multiple-Access-Channel}
In this section, we first present an achievability rate region for
the secure discrete memoryless MAC. We then characterize the sum
capacity of the degraded secure discrete memoryless and degraded
Gaussian MIMO MAC. We present an achievable S-DoF of the
non-degraded Gaussian MIMO MAC under the perfect secrecy constraint
using Gaussian codebooks. In order to satisfy the perfect secrecy
constraint, we use the random binning coding scheme to generate the
codebooks. To maximize the achievable degrees of freedom, we adopt
the signal alignment scheme to separate the signals at the intended
receiver and simultaneously align the signals into a small subspace
at the eavesdropper.

\subsection{Discrete Memoryless MAC}
In this subsection, we study the secure discrete MAC of
$P(y,z|x_{1},...,x_{K})$ with $K$ users and an external
eavesdropper. The following theorems illustrate our results:
\begin{thm}\label{th1}
For the perfectly secure discrete memoryless MAC of $P(y,z|x_{1},...,x_{K})$, the region of
\begin{IEEEeqnarray}{rl}\label{eq2}
\left\{(R_{1},...,R_{K})|\sum_{i\in\mathcal{S}}R_{i}\leq
I(U_{\mathcal{S}};Y|U_{\mathcal{S}^{c}}),\sum_{k\in\mathcal{K}}R_{k}\leq\left[I(U_{\mathcal{K}};Y)-I(U_{\mathcal{K}};Z)\right]^{+},~~~\forall
\mathcal{S}\subset\mathcal{K}\right\},
\end{IEEEeqnarray}
for any distribution of
$P(u_{1})P(u_{2})....P(u_{K})P(x_{1}|u_{1})P(x_{2}|u_{2})...P(x_{K}|u_{K})P(y,z|x_{1},...,x_{K})$,
is achievable.
\end{thm}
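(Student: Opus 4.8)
The plan is to establish the achievability of the region in (\ref{eq2}) via a random binning argument that simultaneously handles reliable decoding at the intended receiver and the equivocation condition at the eavesdropper. I would generate, for each user $k$, a codebook of $2^{n(R_k + \tilde{R}_k)}$ codewords $\mathbf{u}_k^n$ drawn i.i.d. according to $P(u_k)$, where $\tilde{R}_k$ is a \emph{dummy} (randomization) rate used to confuse the eavesdropper. The $2^{n(R_k+\tilde{R}_k)}$ codewords are partitioned into $2^{nR_k}$ bins, each containing $2^{n\tilde{R}_k}$ codewords. To send message $w_k$, encoder $k$ selects a codeword uniformly at random from bin $w_k$, then passes the chosen $\mathbf{u}_k^n$ through the memoryless channel $P(x_k|u_k)$ to produce $\mathbf{x}_k^n$. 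This is precisely the i.i.d.\ random binning scheme anticipated in the introduction.

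First I would handle reliability. The receiver applies joint-typicality decoding over the auxiliary codewords $\{\mathbf{u}_k^n\}$ against $\mathbf{y}^n$. By the standard multiple-access joint-typicality analysis (the error-event union bound over all subsets $\mathcal{S}$ of users decoded in error), the probability of error vanishes provided that, for every $\mathcal{S}\subseteq\mathcal{K}$,
\begin{equation}
\sum_{i\in\mathcal{S}}(R_i+\tilde{R}_i)\leq I(U_{\mathcal{S}};Y|U_{\mathcal{S}^c}).
\end{equation}
Comparing with (\ref{eq2}), the target constraints $\sum_{i\in\mathcal{S}}R_i\leq I(U_{\mathcal{S}};Y|U_{\mathcal{S}^c})$ for $\mathcal{S}\subset\mathcal{K}$ leave room for strictly positive dummy rates $\tilde{R}_i$, and the total budget $\sum_{k}(R_k+\tilde{R}_k)\leq I(U_{\mathcal{K}};Y)$ is the $\mathcal{S}=\mathcal{K}$ reliability bound.

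Next I would establish secrecy, i.e.\ $H(W_{\mathcal{K}}|\mathbf{z}^n)\geq H(W_{\mathcal{K}})-n\epsilon$. Using $H(W_{\mathcal{K}}|\mathbf{z}^n)=H(W_{\mathcal{K}})-I(W_{\mathcal{K}};\mathbf{z}^n)$, it suffices to bound $I(W_{\mathcal{K}};\mathbf{z}^n)\leq n\epsilon$. I would expand $I(W_{\mathcal{K}};\mathbf{z}^n)\leq I(U_{\mathcal{K}}^n;\mathbf{z}^n)-H(U_{\mathcal{K}}^n|W_{\mathcal{K}},\mathbf{z}^n)$ and show the leakage is small when the total dummy rate absorbs the eavesdropper's resolution, namely $\sum_k\tilde{R}_k\geq I(U_{\mathcal{K}};Z)$. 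The crux is a list/equivocation argument: conditioned on the bin indices $W_{\mathcal{K}}$, the eavesdropper still cannot pin down which codewords were sent within the bins, because there are roughly $2^{n\sum_k\tilde{R}_k}\geq 2^{nI(U_{\mathcal{K}};Z)}$ equally likely candidates jointly typical with $\mathbf{z}^n$; the standard bound yields $H(U_{\mathcal{K}}^n|W_{\mathcal{K}},\mathbf{z}^n)\geq n(\sum_k\tilde{R}_k - I(U_{\mathcal{K}};Z))-n\delta$. Combining the reliability and secrecy constraints and eliminating the auxiliary rates by choosing $\sum_k\tilde{R}_k = I(U_{\mathcal{K}};Z)$ gives the sum constraint $\sum_{k}R_k\leq I(U_{\mathcal{K}};Y)-I(U_{\mathcal{K}};Z)$, and taking the $[\cdot]^+$ accounts for the trivial case.

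I expect the secrecy/equivocation step to be the main obstacle. The delicate point is that the joint leakage must be controlled for the \emph{entire} message tuple $W_{\mathcal{K}}$ simultaneously, so the list-decoding argument at the eavesdropper is over the product codebook and the randomization rates are pooled into a single budget $\sum_k\tilde{R}_k$ matched against $I(U_{\mathcal{K}};Z)$; the per-subset secrecy guarantees then follow automatically, as the preliminaries already showed that perfect secrecy of $W_{\mathcal{K}}$ implies perfect secrecy of every $W_{\mathcal{S}}$. I would close by invoking a standard expurgation/averaging argument over the random codebook ensemble to assert the existence of a single deterministic code meeting both $P_e^{(n)}\leq\epsilon$ and the equivocation bound.
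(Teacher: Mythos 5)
Your construction, reliability analysis, and rate bookkeeping coincide with the paper's scheme: i.i.d.\ random binning with per-user dummy rates, prefix channels $P(x_k|u_k)$, joint-typicality decoding at the legitimate receiver, and a pooled dummy-rate budget matched to $I(U_{\mathcal{K}};Z)$; your remark that secrecy of $W_{\mathcal{K}}$ implies per-subset secrecy also matches the paper's preliminaries. The genuine gap is in the secrecy step, and it is directional. Using the Markov chain $W_{\mathcal{K}}\rightarrow U_{\mathcal{K}}^{n}\rightarrow \mathbf{z}^{n}$, the fact that each codeword lies in exactly one bin, and uniform selection within bins, the correct decomposition is
\begin{equation}
I(W_{\mathcal{K}};\mathbf{z}^{n})=I(U_{\mathcal{K}}^{n};\mathbf{z}^{n})-H(U_{\mathcal{K}}^{n}|W_{\mathcal{K}})+H(U_{\mathcal{K}}^{n}|W_{\mathcal{K}},\mathbf{z}^{n})=I(U_{\mathcal{K}}^{n};\mathbf{z}^{n})-n\sum_{k\in\mathcal{K}}\tilde{R}_{k}+H(U_{\mathcal{K}}^{n}|W_{\mathcal{K}},\mathbf{z}^{n}).
\end{equation}
Your stated inequality $I(W_{\mathcal{K}};\mathbf{z}^{n})\leq I(U_{\mathcal{K}}^{n};\mathbf{z}^{n})-H(U_{\mathcal{K}}^{n}|W_{\mathcal{K}},\mathbf{z}^{n})$ drops the term $H(U_{\mathcal{K}}^{n}|W_{\mathcal{K}})$ and flips the sign of the last term, and is not valid in general. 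More importantly, since $H(U_{\mathcal{K}}^{n}|W_{\mathcal{K}},\mathbf{z}^{n})$ enters with a \emph{plus} sign, what the proof needs is an \emph{upper} bound on it; this is exactly the paper's step $(a)$, Fano's inequality applied at the eavesdropper, which holds because the within-bin rate is (at most) $I(U_{\mathcal{K}};Z)$, so the eavesdropper, once told the bin indices, can decode the codewords. Your proposed ``crux'' is the opposite: a lower bound expressing that the eavesdropper \emph{cannot} pin down the codewords. That is the wrong direction (a lower bound on a term appearing with a plus sign cannot upper-bound the leakage), and at your own operating point $\sum_{k}\tilde{R}_{k}=I(U_{\mathcal{K}};Z)$ it degenerates to $H(U_{\mathcal{K}}^{n}|W_{\mathcal{K}},\mathbf{z}^{n})\geq -n\delta$, which is vacuous.

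The conceptual point is that in Wyner-type binning the message is protected not because the eavesdropper is confused about which codeword in the bin was sent, but precisely because it is \emph{not} confused: its entire resolution capability $I(U_{\mathcal{K}};Z)$ is exhausted resolving the dummy randomness, leaving nothing for the bin index. Concretely, since $W_{\mathcal{K}}$ is a function of $U_{\mathcal{K}}^{n}$, one has $H(W_{\mathcal{K}}|\mathbf{z}^{n})=H(U_{\mathcal{K}}^{n}|\mathbf{z}^{n})-H(U_{\mathcal{K}}^{n}|W_{\mathcal{K}},\mathbf{z}^{n})\geq H(U_{\mathcal{K}}^{n})-I(U_{\mathcal{K}}^{n};\mathbf{z}^{n})-n\epsilon_{n}$, where the last step uses the Fano upper bound; the paper then needs its auxiliary lemma to single-letterize, i.e.\ $I(U_{\mathcal{K}}^{n};Y^{n})\geq nI(U_{\mathcal{K}};Y)-n\delta_{1n}$ and $I(U_{\mathcal{K}}^{n};Z^{n})\leq nI(U_{\mathcal{K}};Z)+n\delta_{4n}$, because the transmitted $U_{\mathcal{K}}^{n}$ is not exactly i.i.d.; your sketch silently assumes this as well. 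To repair your argument: replace the list-counting lower bound by the Fano upper bound at the eavesdropper (noting that, strictly, decodability of the dummy indices over the eavesdropper's MAC requires the per-subset conditions $\sum_{k\in\mathcal{S}}\tilde{R}_{k}\leq I(U_{\mathcal{S}};Z|U_{\mathcal{S}^{c}})$ and not merely the sum condition --- a point the paper also glosses over), and insert the single-letterization lemma; the rest of your outline then goes through.
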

\begin{proof}
The proof is available in Appendix $A$.
\end{proof}
In this Theorem $[x]^{+}$ denotes the positivity operator, i.e.,
$[x]^{+}=\max(x,0)$. Reference \cite{20} derived an achievable rate
region with Gaussian codebooks and power control for the Gaussian
secure MAC when all the transmitters and receivers are equipped with
a single antenna. Theorem \ref{th1}, however, gives an achievability
secrecy rate region for the general discrete memoryless MAC. Our
achievability rate region is also larger than the region of
\cite{20} in the special Gaussian channel case. Therefore, we have
the following achievable sum rate for the secure discrete memoryless
MAC:

\begin{coro}\label{cor2}
For the secure discrete memoryless MAC of $P(y,z|x_{1},...,x_{K})$,
the following sum rate is achievable:
\begin{IEEEeqnarray}{rl}
R_{\hbox{sum}}=\max\left[I(U_{\mathcal{K}};Y)-I(U_{\mathcal{K}};Z)\right]^{+},
\end{IEEEeqnarray}
where the maximization is over all distributions
$P(u_{1})...P(u_{K})P(x_{1}|u_{1})...P(x_{K}|u_{K})P(y,z|x_{1},...,x_{K})$
that satisfy the markov chain $W_{\mathcal{K}}\rightarrow
U_{\mathcal{K}}\rightarrow X_{\mathcal{K}}\rightarrow YZ$.
\end{coro}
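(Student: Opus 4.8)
The plan is to obtain Corollary \ref{cor2} directly from Theorem \ref{th1} by showing that, for each fixed input distribution, the secrecy sum constraint is the binding one and that a rate tuple saturating it genuinely lies inside the region (\ref{eq2}). Fix any admissible distribution $P(u_1)\cdots P(u_K)P(x_1|u_1)\cdots P(x_K|u_K)P(y,z|x_1,\dots,x_K)$; this factorization automatically induces the Markov chain $W_{\mathcal{K}}\rightarrow U_{\mathcal{K}}\rightarrow X_{\mathcal{K}}\rightarrow YZ$ required in the statement. By Theorem \ref{th1}, every nonnegative tuple obeying $\sum_{i\in\mathcal{S}}R_i\leq I(U_{\mathcal{S}};Y|U_{\mathcal{S}^c})$ for all proper $\mathcal{S}\subset\mathcal{K}$ together with $\sum_{k\in\mathcal{K}}R_k\leq[I(U_{\mathcal{K}};Y)-I(U_{\mathcal{K}};Z)]^+$ is achievable.

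First I would record the trivial upper bound: the last constraint forces $\sum_k R_k\leq[I(U_{\mathcal{K}};Y)-I(U_{\mathcal{K}};Z)]^+$ for any point of (\ref{eq2}), so this value cannot be exceeded. The content is therefore to exhibit a feasible point that attains it. To this end I would invoke the polymatroid structure of the reliability constraints: the set $\{R\geq 0:\sum_{i\in\mathcal{S}}R_i\leq I(U_{\mathcal{S}};Y|U_{\mathcal{S}^c})\ \forall\,\mathcal{S}\subseteq\mathcal{K}\}$ is a polymatroid whose rank function $f(\mathcal{S})=I(U_{\mathcal{S}};Y|U_{\mathcal{S}^c})$ is monotone and submodular under the independent-$U_k$ factorization, and whose maximum sum rate equals $f(\mathcal{K})=I(U_{\mathcal{K}};Y)$, attained at some vertex $R^\star$. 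The key observation is that the secrecy sum never exceeds this maximum, since $I(U_{\mathcal{K}};Z)\geq 0$ gives $[I(U_{\mathcal{K}};Y)-I(U_{\mathcal{K}};Z)]^+\leq I(U_{\mathcal{K}};Y)$. Hence, setting $R=\lambda R^\star$ with $\lambda=[I(U_{\mathcal{K}};Y)-I(U_{\mathcal{K}};Z)]^+/I(U_{\mathcal{K}};Y)\in[0,1]$ (and $R=0$ in the degenerate case $I(U_{\mathcal{K}};Y)=0$, where the target is also $0$) yields a tuple with nonnegative components whose sum equals exactly $[I(U_{\mathcal{K}};Y)-I(U_{\mathcal{K}};Z)]^+$, and which still satisfies every partial-sum constraint because scaling down by $\lambda\leq 1$ only loosens them: $\lambda\sum_{i\in\mathcal{S}}R^\star_i\leq\sum_{i\in\mathcal{S}}R^\star_i\leq I(U_{\mathcal{S}};Y|U_{\mathcal{S}^c})$. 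This tuple lies in (\ref{eq2}) and is thus achievable, proving that the sum rate $[I(U_{\mathcal{K}};Y)-I(U_{\mathcal{K}};Z)]^+$ is achievable for the fixed distribution; maximizing over all admissible distributions then gives the stated $R_{\hbox{sum}}$.

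The step I expect to require the most care is confirming that the partial-sum (reliability) constraints never obstruct saturation of the secrecy sum constraint. The scaling argument above resolves this cleanly, but it hinges on $f(\mathcal{S})=I(U_{\mathcal{S}};Y|U_{\mathcal{S}^c})$ being a bona fide polymatroid rank function; I would therefore verify monotonicity and submodularity explicitly, both of which follow from the independence of the auxiliary variables $U_1,\dots,U_K$ via the standard MAC computation. Once that structural fact is in place, the corollary is immediate from Theorem \ref{th1}.
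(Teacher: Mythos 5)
Your proposal is correct and follows the same route as the paper: the paper states Corollary \ref{cor2} as an immediate consequence of Theorem \ref{th1} without any further argument, and your proof is exactly that derivation with the omitted detail made explicit. Your polymatroid/scaling step—taking a successive-decoding corner point $R^{\star}$ with $\sum_{k}R^{\star}_{k}=I(U_{\mathcal{K}};Y)$ and shrinking it by $\lambda=\left[I(U_{\mathcal{K}};Y)-I(U_{\mathcal{K}};Z)\right]^{+}/I(U_{\mathcal{K}};Y)\in[0,1]$, which preserves every partial-sum constraint—is a sound verification that the reliability constraints of (\ref{eq2}) never obstruct saturation of the secrecy sum bound.
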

%Note that calculating the above non-convex optimization problem even for a single antenna AWGN channel is still a open problem. In the next subsection, however, we evaluate this problem for the degraded secure Gaussian MAC.

\subsection{Gaussian MIMO MAC}
Consider the secure Gaussian MIMO MAC of (\ref{eq0}) which can be re-written as follows:
\begin{IEEEeqnarray}{lr}
\mathbf{y}&=\mathbf{H}\mathbf{x}+\mathbf{n_{1}}\\ \nonumber
\mathbf{z}&=\mathbf{H_{e}}\mathbf{x}+\mathbf{n_{2}}
\end{IEEEeqnarray}
where, $\mathbf{H}=\left[\mathbf{H_{1}}, \mathbf{H_{2}},...,
\mathbf{H_{K}}\right]$, $\mathbf{H_{e}}=\left[\mathbf{H_{1,e}},
\mathbf{H_{2,e}},..., \mathbf{H_{K,e}}\right]$, and
$\mathbf{x}=\left[\mathbf{x_{1}^{\dag}}, \mathbf{x_{2}^{\dag}},...,
\mathbf{x_{K}^{\dag}}\right]^{\dag}$. Without loss of generality,
assume that all nodes are equipped with the same number of antennas,
i.e., $M_{k}=N=N_{e}$ for all $k\in\mathcal{K}$. Note that when the
channel gain matrices $\mathbf{H_{k}}$ and $\mathbf{H_{k,e}}$ are
identity matrices we can determine that one channel output is
degraded w.r.t. another by examining whether their noise covariances
can be ordered correctly. In (\ref{eq0}), however, all noise
covariances are identity matrices and the receive vectors differ
only in their channel gain matrices. Therefore, similar to
\cite{30}, we use the following definition to determine a
degradedness order:
\begin{defi}
A receive vector
$\mathbf{z}=\mathbf{H_{e}}\mathbf{x}+\mathbf{n_{2}}$ is said to be
degraded w.r.t. $\mathbf{y}=\mathbf{H}\mathbf{x}+\mathbf{n_{1}}$ if
there exists a matrix $\mathbf{D}$ such that
$\mathbf{DH}=\mathbf{H_{e}}$ and such that
$\mathbf{D}\mathbf{D^{\dag}}\preceq\mathbf{I}$. Alternatively, we
say that $\mathbf{H_{e}}$ is degraded w.r.t. $\mathbf{H}$.
\end{defi}

According to this definition, it is easy to see that $\mathbf{y}$
can be approximated by multiplying $\mathbf{D}\mathbf{y}$. The
approximated channel has a different additive noise which is now
given by $\mathbf{Dn_{1}}\sim \mathcal{N}(0,\mathbf{DD^{\dag}} )$
compared to the original channel. As this approximated channel has
less noise ($\mathbf{DD^{\dag}} \preceq \mathbf{I}$), however, it is
clear that any message that can be decoded by the eavesdropper, can
also be decoded by the intended receiver. In the other words
$W_{\mathcal{K}}\rightarrow \mathbf{x}\rightarrow \mathbf{y}
\rightarrow \mathbf{z}$ forms a Markov chain.

\begin{thm}\label{th2}
The secrecy sum capacity of the degraded Gaussian MIMO MAC is given by
\begin{IEEEeqnarray}{lr}
C_{\hbox{sum}}=\max_{\mathbf{Q_{k}}: \mathbf{Q_{k}}\succeq 0, Tr(\mathbf{Q_{k}})\leq P}\frac{1}{2}\log\left|I+\sum_{k\in\mathcal{K}}\mathbf{H_{k}Q_{k}H_{k}^{\dag}}\right|-\frac{1}{2}\log\left|I+\sum_{k\in\mathcal{K}}\mathbf{H_{k,e}Q_{k}H_{k,e}^{\dag}}\right|.
\end{IEEEeqnarray}
\end{thm}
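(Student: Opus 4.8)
The plan is to establish the capacity by proving matching achievability and converse bounds, where achievability follows from the discrete-memoryless result already in hand and the converse is the genuine work.

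\medskip

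\textbf{Achievability.} First I would invoke Corollary~\ref{cor2} with the auxiliary random variables chosen to be Gaussian. Specifically, take each $U_k = X_k$ to be a zero-mean Gaussian vector with covariance $\mathbf{Q_k}$ satisfying $\mathbf{Q_k}\succeq 0$ and $Tr(\mathbf{Q_k})\leq P$, and let the inputs be mutually independent across users. With this choice, $U_{\mathcal{K}} = X_{\mathcal{K}} = \mathbf{x}$, and the conditional distributions $P(y,z\mid x_1,\dots,x_K)$ are exactly the Gaussian kernels induced by $(\ref{eq0})$. Evaluating the mutual informations for jointly Gaussian vectors gives
\begin{equation}
I(U_{\mathcal{K}};Y) = \frac{1}{2}\log\left|I+\sum_{k\in\mathcal{K}}\mathbf{H_{k}Q_{k}H_{k}^{\dag}}\right|,\quad I(U_{\mathcal{K}};Z)=\frac{1}{2}\log\left|I+\sum_{k\in\mathcal{K}}\mathbf{H_{k,e}Q_{k}H_{k,e}^{\dag}}\right|,
\end{equation}
so that maximizing the difference over the admissible covariances yields the stated expression as an achievable secrecy sum rate. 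Because the channel is degraded, $I(U_{\mathcal{K}};Y)\geq I(U_{\mathcal{K}};Z)$ for any input, so the positivity operator is inactive and can be dropped. One technical point to note is that Corollary~\ref{cor2} is stated for discrete memoryless channels, so I would remark that the extension to the Gaussian case with an average-power constraint is standard (via quantization/discretization arguments), which is routine here.

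\medskip

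\textbf{Converse.} This is where the main effort lies. For any sequence of codes with vanishing error and full equivocation $d=1$, I would start from Fano's inequality and the secrecy constraint to bound the sum rate by the conditional mutual information of the messages given the two outputs, obtaining roughly $\sum_k R_k \leq I(W_{\mathcal{K}};\mathbf{y^n}) - I(W_{\mathcal{K}};\mathbf{z^n}) + n\epsilon_n$. Using the Markov chain $W_{\mathcal{K}}\rightarrow \mathbf{x^n}\rightarrow \mathbf{y^n}\rightarrow \mathbf{z^n}$ guaranteed by degradedness together with the data-processing inequality, this difference is upper bounded by $I(\mathbf{x^n};\mathbf{y^n}\mid \mathbf{z^n})=I(\mathbf{x^n};\mathbf{y^n})-I(\mathbf{x^n};\mathbf{z^n})$. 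The remaining task is to single-letterize this per-block quantity and show that, subject to the trace power constraints on each user's per-symbol covariance, the optimizing input is Gaussian with independent users.

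\medskip

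The hard part will be the final maximization step: proving that Gaussian inputs are optimal for the degraded-MAC secrecy-rate-difference expression and that no rate loss occurs from restricting to independent users. The natural tool is a vector extremal inequality of the entropy-power / Liu--Viswanath type (as used for degraded MIMO secrecy problems, cf.\ the reference \cite{30} cited above for the degradedness definition), which states that for a fixed input covariance constraint the Gaussian distribution simultaneously maximizes the difference $h(\mathbf{y^n})-h(\mathbf{z^n})$ under the degradedness ordering $\mathbf{DD^{\dag}}\preceq\mathbf{I}$. I would apply this inequality to the single-letterized expression, reduce the per-user trace constraints to the covariance feasible set $\{\mathbf{Q_k}\succeq 0,\ Tr(\mathbf{Q_k})\leq P\}$, and conclude that the converse matches the achievable region. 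A subtlety to handle carefully is that each user is individually power-constrained and the inputs are independent, so the extremal inequality must be invoked in a form that respects the block-diagonal covariance structure across users rather than an aggregate constraint; verifying that the optimal aggregate covariance decomposes into independent per-user Gaussian inputs is the delicate point that closes the gap between converse and achievability.
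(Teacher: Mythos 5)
Your proposal is correct in substance and its skeleton matches the paper's: the achievability is identical (invoke Corollary~\ref{cor2} with $U_{k}=\mathbf{x_{k}}\sim\mathcal{N}(0,\mathbf{Q_{k}})$, dropping $[\cdot]^{+}$ by degradedness), and your converse opens exactly as the paper's Lemma~\ref{lm1} does: Fano plus the secrecy constraint, the degradedness Markov chain $W_{\mathcal{K}}\rightarrow\mathbf{x}\rightarrow\mathbf{y}\rightarrow\mathbf{z}$, and single-letterization of $I(\mathbf{x}_{\mathcal{K}};\mathbf{y}|\mathbf{z})$. Where you genuinely diverge is the Gaussian-optimality step. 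You rewrite the bound as $I(\mathbf{x};\mathbf{y})-I(\mathbf{x};\mathbf{z})$, i.e.\ as the entropy difference $h(\mathbf{y})-h(\mathbf{z})$ up to constant noise entropies, and then appeal to a Liu--Viswanath-type extremal inequality to show Gaussian inputs maximize that difference; this would work, but it imports heavy machinery and leaves the crux as a citation. The paper instead keeps the bound in conditional form: since $h(\mathbf{y}|\mathbf{x}_{\mathcal{K}},\mathbf{z})=h(\mathbf{n_{1}})$ does not depend on the input distribution, it suffices to maximize the single term $h(\mathbf{y}|\mathbf{z})$ under a fixed joint second moment, and the standard fact that jointly Gaussian vectors maximize conditional differential entropy for a given covariance settles this with no extremal inequality at all. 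The residual work in the paper is its Lemma~\ref{lm2}: the degradedness identity $\mathbf{D}\mathbf{H_{k}}=\mathbf{H_{k,e}}$ lets one rewrite the log-determinant difference as a ratio of determinants, from which concavity (used for the Jensen/time-sharing step inside Lemma~\ref{lm1}) and Loewner-order monotonicity (used to pass from the code covariances $\mathbf{K_{x_{k}}}\preceq\mathbf{Q_{k}}$ to the constraint set) follow; your remark about reducing to the feasible set is exactly this monotonicity, which your outline does not supply. Finally, the ``delicate point'' you flag --- that the optimum should decompose into independent per-user Gaussian inputs --- is not actually an obstacle in either route: the converse bound depends on the inputs only through their second moments, and the independence of the $K$ encoders already makes the stacked input covariance block diagonal with blocks $\mathbf{K_{x_{k}}}$, so no separate decomposition argument is needed. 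In short, your plan closes the proof but at the price of an invoked, unproved extremal inequality whose generality is not needed here; the paper's conditional-entropy argument is more elementary and self-contained, while your route is the one that would scale to settings where degradedness fails.
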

\begin{proof}
We need to show that the secrecy sum capacity is as follows:
\begin{IEEEeqnarray}{lr}
C_{\hbox{sum}}=\frac{1}{2}\log\left|I+\sum_{k\in\mathcal{K}}\mathbf{H_{k}Q_{k}H_{k}^{\dag}}\right|-\frac{1}{2}\log\left|I+\sum_{k\in\mathcal{K}}\mathbf{H_{k,e}Q_{k}H_{k,e}^{\dag}}\right|,
\end{IEEEeqnarray}
if the inputs are subject to the following covariance matrices constraints:
\begin{IEEEeqnarray}{lr}
\mathbf{K_{x_{k}}}\leq \mathbf{Q_{k}},~~~~~~~~~~~~~~~~\forall k\in\mathcal{K},
\end{IEEEeqnarray}
where $\mathbf{K_{x_{k}}}$ denotes the covariance matrix of
$\mathbf{x_{k}}$. Theorem \ref{th3} then follows by maximization
over all $\mathbf{Q_{1}}$, $\mathbf{Q_{2}}$ $,...$, and
$\mathbf{Q_{K}}$ that satisfy the power constraint, i.e.,
$Tr(\mathbf{Q_{k}})\leq P$, for all $k\in\mathcal{K}$.

The achievability of this theorem follows from Theorem \ref{cor2} by
choosing $U_{k}=\mathbf{x_{k}}\sim \mathbf{N}(0,\mathbf{Q_{k}})$.
The converse proof is presented in Appendix B.
\end{proof}

According to (\ref{eq13}), it is easy to show that:
\begin{coro}
The total S-DoF for the degraded Gaussian MIMO MAC is $\eta=0$.
\end{coro}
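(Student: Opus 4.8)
The plan is to use Theorem~\ref{th2}, which gives the secrecy sum capacity \emph{exactly}, so that computing the S-DoF reduces to determining the growth rate of $C_{\hbox{sum}}$ in $P$. Since $\eta=\lim_{P\to\infty}C_{\hbox{sum}}/(\tfrac12\log P)$ by (\ref{sdf}), it suffices to show that $C_{\hbox{sum}}$ stays bounded as $P\to\infty$; any $O(1)$ bound then forces the limit to be $0$. The starting point is the degradedness hypothesis: by the definition preceding Theorem~\ref{th2}, there is a matrix $\mathbf{D}$ with $\mathbf{D}\mathbf{H}=\mathbf{H_{e}}$ and $\mathbf{D}\mathbf{D^{\dag}}\preceq\mathbf{I}$, i.e. $\mathbf{H_{k,e}}=\mathbf{D}\mathbf{H_{k}}$ for every $k\in\mathcal{K}$.

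Next I would substitute this into the capacity expression of Theorem~\ref{th2}. Writing $\mathbf{A}=\sum_{k\in\mathcal{K}}\mathbf{H_{k}}\mathbf{Q_{k}}\mathbf{H_{k}^{\dag}}\succeq 0$, the eavesdropper term becomes $\sum_{k}\mathbf{H_{k,e}}\mathbf{Q_{k}}\mathbf{H_{k,e}^{\dag}}=\mathbf{D}\mathbf{A}\mathbf{D^{\dag}}$, so that for every feasible choice of $\{\mathbf{Q_{k}}\}$,
\begin{equation}
\tfrac12\log|\mathbf{I}+\mathbf{A}|-\tfrac12\log|\mathbf{I}+\mathbf{D}\mathbf{A}\mathbf{D^{\dag}}|=\tfrac12\log\frac{|\mathbf{I}+\mathbf{A}|}{|\mathbf{I}+\mathbf{D}\mathbf{A}\mathbf{D^{\dag}}|}.
\end{equation}
The goal is to bound this ratio uniformly in $\mathbf{A}$ (hence in $\{\mathbf{Q_{k}}\}$ and in $P$). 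Assuming $\mathbf{D}$ invertible, a one-line determinant factorization gives $|\mathbf{I}+\mathbf{D}\mathbf{A}\mathbf{D^{\dag}}|=|\det\mathbf{D}|^{2}\,|(\mathbf{D^{\dag}}\mathbf{D})^{-1}+\mathbf{A}|$. The key structural fact is that $\mathbf{D}\mathbf{D^{\dag}}\preceq\mathbf{I}$ is equivalent, for square $\mathbf{D}$, to $\mathbf{D^{\dag}}\mathbf{D}\preceq\mathbf{I}$, whence $(\mathbf{D^{\dag}}\mathbf{D})^{-1}\succeq\mathbf{I}$ and therefore $(\mathbf{D^{\dag}}\mathbf{D})^{-1}+\mathbf{A}\succeq\mathbf{I}+\mathbf{A}\succ 0$. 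Using monotonicity of the determinant on the positive-definite cone, $|(\mathbf{D^{\dag}}\mathbf{D})^{-1}+\mathbf{A}|\ge|\mathbf{I}+\mathbf{A}|$, so the ratio above is at most $|\det\mathbf{D}|^{-2}$ and
\begin{equation}
C_{\hbox{sum}}\le-\log|\det\mathbf{D}|,
\end{equation}
a constant independent of $P$ (and nonnegative, since all singular values of $\mathbf{D}$ are at most $1$).

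Finally, dividing by $\tfrac12\log P$ and letting $P\to\infty$ in (\ref{sdf}) gives $\eta=0$, as the numerator is bounded while the denominator diverges. Intuitively, both log-determinants scale like $N\cdot\tfrac12\log P$ and their leading terms cancel, leaving only the bounded gap $-\log|\det\mathbf{D}|$ created by the contraction $\mathbf{D}$. I expect the main obstacle to be the invertibility of $\mathbf{D}$: if $\mathbf{D}$ were rank-deficient the eavesdropper would lose observation dimensions, the factorization above would break down, and the ratio could grow like a power of $P$, yielding a strictly positive S-DoF. Under the standing assumption that the gain matrices are generic and of full rank $N=N_{e}$ this does not occur---for instance $\mathbf{D}=\mathbf{H_{1,e}}\mathbf{H_{1}^{-1}}$ is then nonsingular---but making this step airtight, or otherwise handling the singular case by a continuity/perturbation argument on $\mathbf{D}$, is the one point that requires care.
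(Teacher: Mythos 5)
Your proposal is correct and is essentially the paper's own argument: your determinant factorization of $|\mathbf{I}+\mathbf{D}\mathbf{A}\mathbf{D^{\dag}}|=|\det\mathbf{D}|^{2}\left|(\mathbf{D^{\dag}D})^{-1}+\mathbf{A}\right|$ is exactly the rewriting in (\ref{eq13}) (obtained there via $|\mathbf{I}+\mathbf{AB}|=|\mathbf{I}+\mathbf{BA}|$), and the resulting uniform bound $C_{\hbox{sum}}\le-\log|\det\mathbf{D}|$, constant in $P$, is precisely the ``easy to show'' step the paper invokes to conclude $\eta=0$ from (\ref{sdf}). The invertibility of $\mathbf{D}$ that you flag as the delicate point is implicitly assumed (and left unaddressed) in the paper's (\ref{eq13}) as well, since $(\mathbf{D^{\dag}D})^{-1}$ appears there.
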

Now consider the general Gaussian MIMO MAC where it's not necessarily degraded. According to Theorem \ref{th1}, by choosing $U_{k}=\mathbf{x_{k}}\sim \mathcal{N}(0,\mathbf{Q_{k}})$, it is easy to see that the following secrecy sum rate is achievable.

\begin{coro}\label{cor1}
For the Gaussian MIMO MAC, an achievable secrecy sum rate is given by
\begin{IEEEeqnarray}{rl}\label{eq14}
R_{\hbox{sum}}=\sum_{k\in\mathcal{K}}R_{k}= \max_{\mathbf{Q}: \mathbf{Q}\succeq 0, Tr(\mathbf{Q})\leq P} \frac{1}{2}\log\left|\mathbf{I}+\sum_{k\in\mathcal{K}}\mathbf{H_{k}Q_{k}H_{k}^{\dag}}\right|-\frac{1}{2}\log\left|\mathbf{I}+\sum_{k\in\mathcal{K}}\mathbf{H_{k,e}Q_{k}H_{k,e}^{\dag}}\right|.
\end{IEEEeqnarray}
\end{coro}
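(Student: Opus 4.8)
The plan is to obtain this corollary as a direct specialization of the achievable region of Theorem~\ref{th1} to Gaussian inputs. Theorem~\ref{th1}, and equivalently Corollary~\ref{cor2}, guarantees that any sum rate not exceeding $\left[I(U_{\mathcal{K}};Y)-I(U_{\mathcal{K}};Z)\right]^{+}$ is achievable for the discrete memoryless MAC, over all product input distributions $P(u_{1})\cdots P(u_{K})P(x_{1}|u_{1})\cdots P(x_{K}|u_{K})$. First I would take the auxiliary variables to coincide with the channel inputs, setting $U_{k}=\mathbf{x_{k}}$ with $\mathbf{x_{k}}\sim\mathcal{N}(0,\mathbf{Q_{k}})$ drawn independently across users, so that $P(x_{k}|u_{k})$ is degenerate and the Markov structure $W_{\mathcal{K}}\rightarrow U_{\mathcal{K}}\rightarrow X_{\mathcal{K}}\rightarrow YZ$ required by Corollary~\ref{cor2} holds trivially. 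With this choice the stacked vector $U_{\mathcal{K}}=\mathbf{x}=(\mathbf{x_{1}},\dots,\mathbf{x_{K}})$ is a zero-mean Gaussian with block-diagonal covariance $\mathrm{diag}(\mathbf{Q_{1}},\dots,\mathbf{Q_{K}})$.

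Next I would evaluate the two mutual-information terms for the Gaussian channel~(\ref{eq0}). Since $\mathbf{y}=\sum_{k}\mathbf{H_{k}x_{k}}+\mathbf{n_{1}}$ with $\mathbf{n_{1}}$ independent of $\mathbf{x}$, the output $\mathbf{y}$ is Gaussian with covariance $\mathbf{I}+\sum_{k}\mathbf{H_{k}Q_{k}H_{k}^{\dag}}$, while $\mathbf{y}\mid\mathbf{x}$ has covariance $\mathbf{I}$. Writing $I(\mathbf{x};\mathbf{y})=h(\mathbf{y})-h(\mathbf{y}\mid\mathbf{x})=h(\mathbf{y})-h(\mathbf{n_{1}})$ and using the differential entropy of a real Gaussian, the common $(2\pi e)^{N}$ factors cancel and I obtain $I(U_{\mathcal{K}};Y)=\frac{1}{2}\log\left|\mathbf{I}+\sum_{k}\mathbf{H_{k}Q_{k}H_{k}^{\dag}}\right|$. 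Repeating the identical computation for the eavesdropper's output $\mathbf{z}=\sum_{k}\mathbf{H_{k,e}x_{k}}+\mathbf{n_{2}}$ gives $I(U_{\mathcal{K}};Z)=\frac{1}{2}\log\left|\mathbf{I}+\sum_{k}\mathbf{H_{k,e}Q_{k}H_{k,e}^{\dag}}\right|$. Subtracting these and then maximizing over all $\mathbf{Q_{k}}\succeq0$ meeting the trace power constraint yields precisely~(\ref{eq14}).

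The one genuine technical point, and the step I expect to require the most care, is that Theorem~\ref{th1} is stated for a discrete memoryless channel, whereas the Gaussian MIMO MAC has continuous input and output alphabets together with an average power constraint. I would bridge this gap in the standard way: apply Theorem~\ref{th1} to a finely quantized version of the channel, replacing the Gaussian inputs by increasingly fine discrete approximations whose covariances approach $\mathbf{Q_{k}}$, and then let the quantization step vanish. The mutual-information functionals above are continuous under this limit, the power constraint $Tr(\mathbf{Q_{k}})\le P$ is preserved throughout, and so the achievable secure sum rate converges to the Gaussian expression; crucially, the secrecy-equivocation guarantee is inherited directly from the random-binning scheme underlying Theorem~\ref{th1} and is unaffected by the limiting argument. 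Finally, although the displayed formula omits the $[\cdot]^{+}$, the statement is one of achievability, so whenever the difference of log-determinants happens to be negative the achievable secure sum rate is simply $0$.
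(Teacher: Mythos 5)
Your proposal is correct and takes essentially the same route as the paper, which obtains the corollary by specializing Theorem~\ref{th1} (via Corollary~\ref{cor2}) with the choice $U_{k}=\mathbf{x_{k}}\sim\mathcal{N}(0,\mathbf{Q_{k}})$ and evaluating the resulting Gaussian mutual-information difference. You in fact supply more detail than the paper, which simply asserts the specialization is ``easy to see''; your added care about the discrete-memoryless-to-continuous bridge and the omitted $[\cdot]^{+}$ fills in gaps the paper leaves implicit.
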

Note that in the above achievable scheme, we chose
$\mathbf{K_{x_{k}}}=\mathbf{Q_{k}}$ which generally is not optimal.
In general, solving the maximization problem of (\ref{eq14}) is
difficult. Reference \cite{20}, however, has solved this problem for
a single antenna case and derived the optimal power control policy.
As shown in \cite{20} even for a single antenna case some users need
to be silent and therefore those users can cooperatively help to jam
the eavesdropper.

We study the S-DoF defined in (\ref{sdf}) to analyze the behavior of
the sum rate in high $SNR$. We design the following strategy scheme
at the transmitters.

To achieve the largest value for S-DoF  we need to separate the
received signals at the legitimate receiver, such that each received
signal has a different dimension in the signal space of the
legitimate receiver. At the same time all the received signals at
the eavesdropper need to be aligned in a minimal subspace of the
signal space of the eavesdropper (see Fig.\ref{f2}).
\begin{figure}
\centerline{\includegraphics[scale=.8]{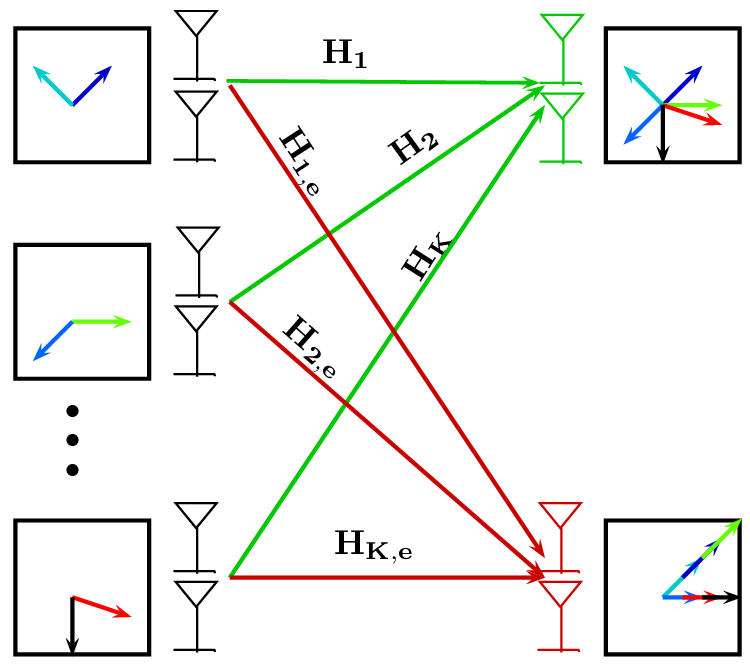}} \caption{Separation/Alignment of Signals at the Intended Receiver/Eavesdropper} \label{f2}
\end{figure}

Let $\mathbf{x_{k}}=\mathbf{F_{k}}\mathbf{v_{k}}$ where, $\mathbf{F_{k}}$ is a pre-coding matrix such that $\mathbf{F_{k}F_{k}^{\dag}}=\mathbf{Q_{k}}$ and $\mathbf{v_{k}}$ is a vector with i.i.d Gaussian components with zero mean and unit variance, i.e, $\mathbf{v_{k}}=[v_{k}^{1},...,v_{k}^{N}]^{\dag}$, such that $v_{k}^{j}\sim \mathcal{N}(0,1)$ for $j=1,2,...,N$. Let $\mbox{\boldmath$\psi$}_{i}=[0,0,...,0,1,0,...,0]^{\dag}$ be a $N\times 1$ vector that all of its elements are zero except the $i$th element which is $1$. Let $\mathbf{F_{k}}=[\mathbf{f^{1}_{k}},\mathbf{f^{2}_{k}},...,\mathbf{f^{N}_{k}}]$, where $\mathbf{f^{i}_{k}}$'s for $i=1,2,...,N$ are  $N\times 1$ vectors that represent the columns of $\mathbf{F_{k}}$. We use the following algorithm to choose $\mathbf{f^{i}_{k}}$'s:
\begin{itemize}
  \item Assume that for users $k=1,2,...,J$, the matrix $\mathbf{H_{k,e}}$ has non-empty null space and the null space of $\mathbf{H_{k,e}}$ for users $k=J+1,J+2,...,K$ is empty. The users $k=1,2,...,\min\{J,N\}$ choose $\mathbf{f^{1}_{k}}$ such that $\mathbf{H_{k,e}}\mathbf{f^{1}_{k}}=0$. Almost surly, we can assume that these vectors occupy separate dimensions at the legitimate receiver.
  \item If $J\geq N$, then all the $N$ dimensions at the legitimate receiver are full and $\eta=N$.
  \item If $J< N$, users $k=J+1,J+2,...,\min\{K,N\}$ then create a vector $\mathbf{f^{1}_{k}}$ such that $\mathbf{H_{k,e}}\mathbf{f^{1}_{k}}=\mbox{\boldmath$\psi$}_{1}$. Theses vectors, therefore, are aligned at the eavesdropper in one dimension and almost surely occupy separate dimensions in the remaining subspace of the legitimate receiver.
  \item If $N\leq K$, at this step all the dimensions at the legitimate receiver are full and one dimension at the eavesdropper has a non-zero signal. Thus, $\eta=N-1$.
  \item If $N>K$, users $k=1,2,...,J$ then also creates a vector $\mathbf{f^{2}_{k}}$ such that $\mathbf{H_{k,e}}\mathbf{f^{2}_{k}}=\mbox{\boldmath$\psi$}_{1}$. These vectors are also aligned at the dimension $\mbox{\boldmath$\psi$}_{1}$ and almost surely $\eta=N-1$
  \item We repeat the above steps such that all the dimensions at the legitimate receiver become full.
\end{itemize}
The above algorithm can be followed  when the users and the
receivers are equipped with different numbers of antennae. The
following theorem characterizes the maximum amount of the total
S-DoF that can be achieved by Gaussian codebooks.
\begin{thm}\label{th3}
For the Gaussian MIMO MAC, the following $\eta$ for S-DoF can almost
surely be achieved for almost all channel gains by using Gaussian
codebooks and under perfect secrecy constraint:
\begin{IEEEeqnarray}{rl}
\eta=\left[\min\left\{\sum_{k\in\mathcal{K}}M_{k},N\right\}-r\right]^{+},
\end{IEEEeqnarray}
where $0\leq r\leq \min\{\sum_{k\in\mathcal{K}}M_{k},N_{e}\}$
depends on the channel gain matrix  $\mathbf{H_{e}}$.
\end{thm}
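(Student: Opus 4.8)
The plan is to evaluate the achievable secrecy sum rate of Corollary~\ref{cor1} under the precoding $\mathbf{x_{k}}=\mathbf{F_{k}}\mathbf{v_{k}}$ and extract its degrees of freedom. Write each column as $\mathbf{f_{k}^{i}}=\sqrt{P}\,\hat{\mathbf{f}}_{k}^{i}$ with unit-norm directions, stack $\hat{\mathbf{F}}=\mathrm{blkdiag}(\hat{\mathbf{F}}_{1},\dots,\hat{\mathbf{F}}_{K})$, and set $\mathbf{A}=\mathbf{H}\hat{\mathbf{F}}$ and $\mathbf{B}=\mathbf{H_{e}}\hat{\mathbf{F}}$. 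Since $\mathbf{Q_{k}}=\mathbf{F_{k}}\mathbf{F_{k}^{\dag}}$ carries the $\Theta(P)$ scaling and $\mathbf{H}\hat{\mathbf{F}}(\mathbf{H}\hat{\mathbf{F}})^{\dag}=\sum_{k}\mathbf{H_{k}}\hat{\mathbf{F}}_{k}\hat{\mathbf{F}}_{k}^{\dag}\mathbf{H_{k}^{\dag}}$, the expression (\ref{eq14}) reduces to
\[
R_{\mathrm{sum}}=\frac{1}{2}\log\left|\mathbf{I}+P\,\mathbf{A}\mathbf{A^{\dag}}\right|-\frac{1}{2}\log\left|\mathbf{I}+P\,\mathbf{B}\mathbf{B^{\dag}}\right|.
\]
Splitting the total power equally among the active streams only changes each term by an $O(1)$ additive constant, which does not affect the DoF.

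First I would record the relevant asymptotics. If $\mathbf{A}$ has rank $d$ then $P\,\mathbf{A}\mathbf{A^{\dag}}$ has exactly $d$ eigenvalues growing linearly in $P$, so $\frac{1}{2}\log|\mathbf{I}+P\,\mathbf{A}\mathbf{A^{\dag}}|=\frac{d}{2}\log P+O(1)$; likewise the eavesdropper term equals $\frac{r}{2}\log P+O(1)$ when $\mathrm{rank}(\mathbf{B})=r$. Substituting into (\ref{sdf}) and letting $P\to\infty$ gives $\eta=[d-r]^{+}$, so the whole theorem reduces to showing that the column-selection algorithm simultaneously realizes $d=\mathrm{rank}(\mathbf{A})=\min\{\sum_{k}M_{k},N\}$ and the smallest feasible $r=\mathrm{rank}(\mathbf{B})$.

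The second step is to count the ranks produced by the algorithm. The design goal is to separate at the legitimate receiver as many streams as it can resolve, namely $\min\{\sum_{k}M_{k},N\}$ of them, while collapsing their eavesdropper images into as few dimensions as possible. Two mechanisms lower $r$: a column chosen in $\ker(\mathbf{H_{k,e}})$ satisfies $\mathbf{H_{k,e}}\hat{\mathbf{f}}_{k}^{i}=0$ and thus contributes to $\mathbf{A}$ but not to $\mathbf{B}$; a column forced to satisfy $\mathbf{H_{k,e}}\hat{\mathbf{f}}_{k}^{i}=\mbox{\boldmath$\psi$}_{j}$ shares the eavesdropper direction $\mbox{\boldmath$\psi$}_{j}$ with every other stream aligned there, so that many transmit streams occupy a single eavesdropper dimension. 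Tracking the algorithm through its cases ($J\ge N$, $N\le K$, and $N>K$) bounds the number of occupied eavesdropper dimensions, giving $0\le r\le\min\{\sum_{k}M_{k},N_{e}\}$.

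The qualifiers ``almost all channel gains'' and ``almost surely'' enter through a general-position argument, and this is also where the main difficulty lies. Two events must hold at once: the received directions $\mathbf{H_{k}}\hat{\mathbf{f}}_{k}^{i}$ must be linearly independent so that $\mathrm{rank}(\mathbf{A})$ reaches $\min\{\sum_{k}M_{k},N\}$, and the alignment equations $\mathbf{H_{k,e}}\hat{\mathbf{f}}_{k}^{i}\in\{0,\mbox{\boldmath$\psi$}_{j}\}$ must be solvable without destroying that independence. Each requirement is the non-vanishing of a determinant in the entries of $(\mathbf{H},\mathbf{H_{e}})$, hence fails only on a set of measure zero; intersecting the finitely many full-measure complements leaves a full-measure set of channels on which both separation and alignment hold. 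The hard part is precisely verifying this simultaneity---that imposing the eavesdropper-alignment constraints on $\hat{\mathbf{f}}_{k}^{i}$ does not force a hidden linear dependence among the receiver images $\mathbf{H_{k}}\hat{\mathbf{f}}_{k}^{i}$---which is what necessitates the case-by-case bookkeeping of the algorithm rather than a single clean dimension count.
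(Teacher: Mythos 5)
Your proposal is correct and follows essentially the same route as the paper: the paper's own justification of Theorem~\ref{th3} consists of exactly this combination of the Gaussian-codebook secrecy rate of Corollary~\ref{cor1}, the column-by-column null-space/alignment precoding algorithm, and an asserted ``almost surely'' genericity of the resulting separation at the legitimate receiver. Your write-up in fact supplies details the paper leaves implicit---the $\frac{d}{2}\log P + O(1)$ log-determinant asymptotics tying ranks to DoF, and the measure-zero intersection argument for simultaneity of alignment and separation---so it is, if anything, more complete than the paper's own treatment, which states the theorem without a formal proof.
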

Note that in Theorem \ref{th3}, it is emphasized that total
$\left[\min\left\{\sum_{k\in\mathcal{K}}M_{k},N\right\}-r\right]^{+}$
S-DoF is achievable for almost all channel gains. It means the set
of all possible gains that the total amount of
$\left[\min\left\{\sum_{k\in\mathcal{K}}M_{k},N\right\}-r\right]^{+}$
S-DoF may not be achieved has the Lebesgue measure zero. In other
words, if all the channel gains are drawn independently from a
random distribution, then almost all of them satisfy properties
required to achieve the total S-DoF, almost surely. The term
``almost surely" means with a probability arbitrary close to 1.
\begin{rem}
In the achievability scheme of Theorem \ref{th3}, all the
transmitted signals are aligned into a $r$ dimensional subspace at
the eavesdropper, and hence, impairs the ability of the eavesdropper
to distinguish any of the secure messages efficiently.
\end{rem}
Now assume that the transmitters can cooperate with each other: we
have a MIMO wiretap channel where the transmitter has
$\sum_{k\in\mathcal{K}}M_{k}$ antennas and the legitimate and
eavesdropper have $N$ and $N_{e}$ antennas, respectively. The
secrecy capacity of this channel is indeed an upper-bound for the
secrecy sum capacity of the Gaussian MIMO MAC. As the capacity of a
non-secret MIMO channel is an upper-bound for the secrecy capacity
of the MIMO wiretap channel, we have the following upper-bound for
S-DoF for the secure Gaussian MIMO MAC.
\begin{lem}
For the Gaussian MIMO MAC, the maximum total achievable S-DoF under
perfect secrecy constraint is given by
\begin{IEEEeqnarray}{rl}
\eta_{\max}=\min\left\{\sum_{k\in\mathcal{K}}M_{k},N\right\}.
\end{IEEEeqnarray}
\end{lem}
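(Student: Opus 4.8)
The plan is to obtain the bound as a converse, by a two-stage relaxation of the secure MAC followed by a standard high-$SNR$ computation, exactly as anticipated in the paragraph preceding the statement. The chain I would set up is: secrecy sum capacity of the MAC $\le$ secrecy capacity of the cooperative MIMO wiretap channel $\le$ ordinary capacity of the point-to-point main channel, and then read off the pre-log factor.

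First I would introduce the cooperative outer bound. Given any $((2^{nR_{1}},\dots,2^{nR_{K}}),n)$ secret code for the MAC with $P_{e}^{(n)}\le\epsilon$ and $\Delta_{\mathcal{K}}\ge 1-\epsilon$, I hand all $K$ messages to a single super-transmitter equipped with the stacked antenna array of $\sum_{k\in\mathcal{K}}M_{k}$ antennas, whose transmit vector is $\mathbf{x}=[\mathbf{x_{1}^{\dag}},\dots,\mathbf{x_{K}^{\dag}}]^{\dag}$ and whose channels to the two receivers are $\mathbf{H}$ and $\mathbf{H_{e}}$. The composite message $W=(W_{1},\dots,W_{K})$ is decoded with the same $\phi(\cdot)$, and the constraint $\Delta_{\mathcal{K}}\ge 1-\epsilon$ is precisely the perfect-secrecy constraint for $W$ on this wiretap channel. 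Hence every secure sum rate $\sum_{k}R_{k}$ achievable on the MAC is a secure rate on the cooperative MIMO wiretap channel, so the secrecy sum capacity of the MAC is upper bounded by the secrecy capacity $C_{s}$ of that channel.

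Second, I would drop the secrecy requirement: imposing the equivocation constraint can only shrink the achievable region, so $C_{s}$ is at most the ordinary (non-confidential) capacity $C=\max_{\mathbf{Q}\succeq 0,\,Tr(\mathbf{Q})\le P}\tfrac{1}{2}\log|\mathbf{I}+\mathbf{HQH^{\dag}}|$ of the point-to-point channel $\mathbf{y}=\mathbf{Hx}+\mathbf{n_{1}}$. Third, I would carry out the high-$SNR$ analysis. Since $\mathbf{H}$ has at most $\min\{\sum_{k}M_{k},N\}$ nonzero singular values, a spectral decomposition of $\mathbf{HQH^{\dag}}$ gives $\tfrac{1}{2}\log|\mathbf{I}+\mathbf{HQH^{\dag}}|\le \tfrac{1}{2}\,\mathrm{rank}(\mathbf{H})\log P+O(1)$ with $\mathrm{rank}(\mathbf{H})\le\min\{\sum_{k}M_{k},N\}$. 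Dividing by $\tfrac{1}{2}\log P$ and letting $P\to\infty$ in definition (\ref{sdf}) yields $\eta\le\min\{\sum_{k}M_{k},N\}$ for every achievable S-DoF, hence $\eta_{\max}\le\min\{\sum_{k}M_{k},N\}$; tightness in the favorable regime follows from Theorem \ref{th3} with $r=0$.

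I expect no genuinely hard step here; the argument is a clean sandwich of relaxations. The only point needing care is the high-$SNR$ bookkeeping: one must verify that the pre-log factor of the MIMO capacity is exactly $\mathrm{rank}(\mathbf{H})$ and that the $O(1)$ term (absorbing the water-filling gain and the ``$+\mathbf{I}$'') is negligible after normalization by $\tfrac{1}{2}\log P$, so that the limit in (\ref{sdf}) is controlled uniformly in $\mathbf{Q}$. A secondary subtlety is making the cooperation step fully rigorous at the level of equivocation rather than rate, i.e. confirming that $H(W\,|\,\mathbf{z^{n}})$ for the merged message inherits the MAC guarantee, which is immediate from $\Delta_{\mathcal{K}}\ge 1-\epsilon$.
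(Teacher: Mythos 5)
Your proposal follows exactly the paper's own argument: the paper justifies this lemma in the paragraph immediately preceding it by the same two-stage relaxation (full transmitter cooperation reduces the secure MAC to a $\sum_{k\in\mathcal{K}}M_{k}\times N$ MIMO wiretap channel, whose secrecy capacity is in turn bounded by the non-secret point-to-point MIMO capacity with pre-log $\min\{\sum_{k\in\mathcal{K}}M_{k},N\}$). Your additional care about the rank/pre-log bookkeeping and about the merged message inheriting the equivocation constraint only makes explicit what the paper leaves implicit, so the two arguments coincide.
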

The maximum penalty for the achievable S-DoF in Theorem \ref{th3} is
therefore $r$, where, $0\leq r\leq
\min\{\sum_{k\in\mathcal{K}}M_{k},N_{e}\}$. We note that an
achievable S-DoF in the MIMO wiretap channel using zero-forcing
beamforming is given as follows.
\begin{lem}
In the MIMO wiretap channel, the following S-DoF is achievable, almost surely.
\begin{IEEEeqnarray}{rl}
\eta=\min\left\{\left[\sum_{k\in\mathcal{K}}M_{k}-r^{'}\right]^{+},N\right\},
\end{IEEEeqnarray}
\end{lem}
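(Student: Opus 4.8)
The plan is to view this cooperative setting as a point-to-point MIMO link from a transmitter with $\sum_{k\in\mathcal{K}}M_{k}$ antennas to the legitimate receiver, in the presence of an eavesdropper, and to use zero-forcing beamforming so that the eavesdropper observes nothing but noise. Writing $r'=\mathrm{rank}(\mathbf{H_{e}})$, the null space of $\mathbf{H_{e}}$ has dimension $\left[\sum_{k\in\mathcal{K}}M_{k}-r'\right]^{+}$. First I would fix a precoding matrix $\mathbf{V}$ whose columns form a basis of $\ker(\mathbf{H_{e}})$ and transmit $\mathbf{x}=\mathbf{V}\mathbf{s}$, where $\mathbf{s}$ is a vector of independent Gaussian streams, each allotted power on the order of $P$ so that the trace constraint $Tr(\mathbf{Q})\leq P$ is met.

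With this choice the eavesdropper receives $\mathbf{z}=\mathbf{H_{e}}\mathbf{V}\mathbf{s}+\mathbf{n_{2}}=\mathbf{n_{2}}$, which is statistically independent of $\mathbf{s}$ and hence of the message. Perfect secrecy is therefore immediate: $I(W_{\mathcal{K}};\mathbf{z^{n}})=0$, so the equivocation equals one exactly and I do not even need the random-binning layer used in the earlier results. The legitimate receiver sees the effective channel $\mathbf{y}=\mathbf{H}\mathbf{V}\mathbf{s}+\mathbf{n_{1}}$, in which $\mathbf{H}\mathbf{V}$ is an $N\times\left[\sum_{k\in\mathcal{K}}M_{k}-r'\right]^{+}$ matrix.

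The remaining step is a degrees-of-freedom count on this effective channel. For almost all channel gains the restriction of the generic matrix $\mathbf{H}$ to the null space of the independently generic matrix $\mathbf{H_{e}}$ retains full rank, so that $\mathrm{rank}(\mathbf{H}\mathbf{V})=\min\left\{N,\left[\sum_{k\in\mathcal{K}}M_{k}-r'\right]^{+}\right\}$. The receiver can then separate exactly this many streams, each carrying one degree of freedom as $P\rightarrow\infty$; since every stream is perfectly secure, the total secure degrees of freedom is $\eta=\min\left\{\left[\sum_{k\in\mathcal{K}}M_{k}-r'\right]^{+},N\right\}$, which is the claim.

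The main obstacle I expect is not the coding construction, which is elementary, but making the two genericity claims rigorous for almost all channel gains: that $\mathrm{rank}(\mathbf{H_{e}})=r'$, so that the null space has the stated dimension, and, more delicately, that projecting $\mathbf{H}$ onto $\ker(\mathbf{H_{e}})$ does not drop the rank below $\min\{N,\dim\ker(\mathbf{H_{e}})\}$. Both follow from the observation that the set of gain matrices on which a suitable minor of $\mathbf{H}\mathbf{V}$, expressed as a polynomial in the entries of $\mathbf{H}$ and $\mathbf{H_{e}}$, vanishes has Lebesgue measure zero. I would establish this by exhibiting a single choice of gains for which the relevant minor is nonzero, which suffices since a nonzero polynomial can vanish only on a measure-zero set, thereby justifying the \emph{almost surely} and \emph{almost all channel gains} qualifiers.
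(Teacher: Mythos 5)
Your proposal is correct and takes exactly the approach the paper intends: the paper states this lemma without proof, attributing it only to zero-forcing beamforming, and your null-space precoding argument (the eavesdropper sees pure noise, so perfect secrecy is immediate without any binning, while $\mathbf{H}\mathbf{V}$ generically has full rank $\min\left\{N,\left[\sum_{k\in\mathcal{K}}M_{k}-r^{'}\right]^{+}\right\}$) is precisely that construction. The only presentational refinement needed is in your genericity step: since $\mathbf{V}$ is not a polynomial function of the entries of $\mathbf{H_{e}}$, you should first fix $\mathbf{H_{e}}$ (hence $\mathbf{V}$), treat the relevant minor of $\mathbf{H}\mathbf{V}$ as a nonzero polynomial in the entries of $\mathbf{H}$ alone, and then conclude the almost-all-gains statement by integrating over $\mathbf{H_{e}}$ (Fubini), rather than viewing the minor as a single polynomial in both gain matrices.
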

where $1\leq r^{'}\leq N_{e}$ depends on the channel gain matrix  $\mathbf{H_{e}}$.
\begin{rem}
 When the transmitters and the intended receiver are equipped with a sufficiently large number of antenna while the eavesdropper is equipped with a limited number of antenna, then Gaussian codebooks provide a near optimum total of S-DoF for the Gaussian MIMO multiple-access-channel under perfect secrecy constraint.
\end{rem}
Note that when the transmitters and the receivers are equipped with
a single antenna, i.e., $M_{k}=N_{e}=N=1$, then the total achieved
S-DoF is $0$. It should be noted that this result comes from the
lack of enough dimension for signal management at the receivers by
using Gaussian codebooks. In our achievability scheme, nodes
$\mathcal{K}-1$ send sequences from a codebook randomly generated in
an i.i.d. fashion according to a Gaussian distribution. These are
the worst noises from the eavesdropper's perspective if Gaussian
i.i.d. signaling is used in $X_{1}$, see \cite{31}. However, since
the channel is fully connected, $\mathcal{K}-1$ are also the worst
noises for the intended receiver. This effect causes the secrecy
rate to saturate, leading to zero S-DoF. The following Theorem
establishes an upper-bound for the total S-DoF.

\section{Secure DoF of the Single-Antenna Multiple-Access-Channel}
In this section we consider the secure multiple-access-channel of
(\ref{eq0}) when all the transmitters and the receivers have a
single antenna, i.e., $M_{k}=N=N_{e}=1$ for all $k\in\mathcal{K}$.
We showed in the previous section that Gaussian codebooks lead to
zero total S-DoF. Here, we will provide a coding scheme based on
integer codebooks and show that for almost all channel gains a
positive total S-DoF is achievable, almost surely. The following
theorem illustrates our results.
\begin{thm}\label{th4}
For the Gaussian single antenna multiple-access-channel of
(\ref{eq0}) with $M_{k}=N=N_{e}=1$, a total $\frac{K-1}{K}$ secure
degrees-of-freedom can be achieved for almost all channel gains,
almost surely.
\end{thm}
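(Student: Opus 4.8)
The plan is to abandon Gaussian codebooks in favour of structured integer (PAM) codebooks and to invoke the \emph{real interference alignment} machinery of \cite{27,28}, so that the single antenna is effectively split into $K$ rationally independent real dimensions. Writing the single-antenna gains as scalars $h_k=\mathbf{H_k}$ and $h_{k,e}=\mathbf{H_{k,e}}$, each user $k$ would draw an integer data symbol $s_k$ from the PAM constellation $\{-Q,\dots,Q\}$ and transmit $x_k=\frac{b}{h_{k,e}}\,s_k$, where the common scale $b$ and constellation size $Q$ are tied to the power by $bQ\asymp\sqrt P$; concretely I would take $Q\asymp P^{\frac{1-\delta}{2K}}$ and $b\asymp\sqrt P/Q$ for a small $\delta>0$. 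With this choice the eavesdropper observes $\mathbf{z}=b\sum_{k}s_k+\mathbf{n_2}$, i.e.\ all users collapse onto a single real dimension, whereas the legitimate receiver observes $\mathbf{y}=b\sum_k (h_k/h_{k,e})s_k+\mathbf{n_1}$, a superposition of $K$ symbols whose effective coefficients $\{h_k/h_{k,e}\}$ are, for almost all gains, rationally independent. This realizes precisely the separation-at-Bob / alignment-at-Eve picture of Fig.\ref{f2}.

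The first thing I would establish is reliable decoding at the intended receiver, and this is the crux of the argument. The difference of two distinct received points equals $b\sum_k (h_k/h_{k,e})q_k$ with integers $q_k\in\{-2Q,\dots,2Q\}$ not all zero, so the minimum distance is controlled by $\min_{q\neq 0}\bigl|\sum_k (h_k/h_{k,e})q_k\bigr|$. Here number theory enters: by the Khintchine--Groshev theorem, for almost every vector $(h_k/h_{k,e})_{k=1}^{K}$ and every $\epsilon>0$ this linear form is bounded below by $c\,(\max_k|q_k|)^{-(K-1)-\epsilon}\gtrsim c\,Q^{-(K-1)-\epsilon}$ (reduce to $K-1$ irrationals plus a free integer and lower-bound by the distance to the nearest integer). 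Hence $d_{\min}\gtrsim b\,Q^{-(K-1)-\epsilon}$, and substituting $b\asymp\sqrt P/Q$ and $Q\asymp P^{\frac{1-\delta}{2K}}$ makes $d_{\min}$ grow as a strictly positive power of $P$ provided $\delta>\epsilon/K$. A nearest-neighbour (or typicality) decoder used over the $n$ channel uses then drives $P_e^{(n)}\to 0$; the excluded channel gains form a Lebesgue-null set, which is the source of the ``almost all channel gains, almost surely'' qualifier.

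For secrecy and the rate count I would apply the achievable region of Theorem \ref{th1} with the auxiliaries taken to be the structured inputs $U_k=\mathbf{x_k}$, which yields an achievable secure sum rate $I(U_{\mathcal K};Y)-I(U_{\mathcal K};Z)$ subject to the MAC constraints. Each PAM symbol carries $\log(2Q+1)\approx\frac{1-\delta}{2K}\log P$ bits, so the $K$ separable streams give $I(U_{\mathcal K};Y)$ a DoF of $(1-\delta)$, while the single aligned dimension at the eavesdropper limits $I(U_{\mathcal K};Z)$ to a DoF of $\tfrac1K$. The secure sum rate therefore has S-DoF $(1-\delta)-\tfrac1K\to\frac{K-1}{K}$ as $\delta,\epsilon\to0$. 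The per-subset constraints $\sum_{i\in\mathcal S}R_i\le I(U_{\mathcal S};Y\mid U_{\mathcal S^c})$, of DoF $|\mathcal S|/K$, remain slack for the symmetric split in which each user is assigned rate $\frac{K-1}{K^2}\cdot\frac12\log P$; and the full-secrecy reduction already derived in the preliminaries guarantees $\Delta_{\mathcal S}\to1$ for every subset once $\Delta_{\mathcal K}\to1$.

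I expect the decodability step to be the main obstacle: everything hinges on the polynomial lower bound for $\min_{q\neq0}\bigl|\sum_k (h_k/h_{k,e})q_k\bigr|$, hence on Khintchine--Groshev, since no Gaussian-style argument gives control over the minimum distance of the overlapping integer lattices at the receiver. A secondary difficulty is transporting the binning/equivocation analysis of Theorem \ref{th1} to discrete, power-constrained inputs and verifying that both mutual informations attain the claimed high-SNR limits; this is handled by the SNR margin built into the factor $(1-\delta)$, which must be chosen large enough to absorb the $Q^{\epsilon}$ Diophantine loss.
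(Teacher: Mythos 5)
Your proposal is correct and follows essentially the same route as the paper's proof: scale each user's integer (PAM) constellation by the inverse of its eavesdropper gain so that all signals collapse onto a single rational dimension at the eavesdropper, invoke the Khintchine--Groshev theorem to lower-bound the minimum distance of the received constellation at the legitimate receiver (which is rationally spread for almost all gains), drive $P_e\to 0$ with a power-exponent choice equivalent to the paper's $Q=\lfloor\widetilde{P}^{\frac{1-\epsilon}{2(K+\epsilon)}}\rfloor$, $A=\widetilde{P}^{\frac{K-1+2\epsilon}{2(K+\epsilon)}}$, and then feed the structured inputs through the random-binning region of Theorem \ref{th1} to get the secure sum rate $I(U_{\mathcal K};Y)-I(U_{\mathcal K};Z)$ with the $\tfrac1K$ alignment loss. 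The only differences are cosmetic (your $\delta>\epsilon/K$ margin versus the paper's coupled exponents, and your explicit check of the per-subset constraints, which the paper leaves implicit), so no gap remains.
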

\begin{proof}
When the transmitters and the receivers are equipped with a single
antenna, then the channel model of (\ref{eq0}) is equivalent as
follows:
\begin{IEEEeqnarray}{rl}
Y&=\sum_{k=1}^{K}h_{k}X_{k}+\widetilde{W}_{1}\\ \nonumber
Z&=\sum_{k=1}^{K}h_{k,e}X_{k}+\widetilde{W}_{2},
\end{IEEEeqnarray}
where $\widetilde{W}_{1}\sim \mathcal{N}(0,1)$,
$\widetilde{W}_{2}\sim \mathcal{N}(0,1)$, and $E[X_{k}^{2}]\leq P$
for all $k\in\mathcal{K}$. Let us define
$\widetilde{X}_{k}\stackrel{\triangle}{=}\frac{h_{k,e}}{A}X_{k}$ and
$\widetilde{h}_{k}\stackrel{\triangle}{=}\frac{h_{k}}{h_{k,e}}$ and
without loss of generality assume that $\widetilde{h}_{K}=1$, then
the channel model is equivalent as follows:
\begin{IEEEeqnarray}{rl}
Y&=A\left[\sum_{k=1}^{K-1}\widetilde{h}_{k}\widetilde{X}_{k}+\widetilde{X}_{K}\right]+\widetilde{W}_{1}\\ \nonumber
Z&=A\sum_{k=1}^{K}\widetilde{X}_{k}+\widetilde{W}_{2},
\end{IEEEeqnarray}
where, $A^{2}E[\widetilde{X}_{k}^{2}]\leq\widetilde{P}\stackrel{\triangle}{=} h_{k,e}^{2} P$. In this model we say that the signals are aligned at the eavesdropper according to the following definition:
\begin{defi}
The signals $\widetilde{X}_{1}$, $\widetilde{X}_{2}$,...,$\widetilde{X}_{K}$ are said to be aligned at a receiver
if its received signal is a rational combination of them.
\end{defi}
Note that, in $n$-dimensional Euclidean spaces ($n\geq 2$), two
signals are aligned when they are received in the same direction at
the receiver. In general, $m$ signals are aligned at a receiver if
they span a subspace with dimension less than $m$. The above
definition, however, generalizes the concept of alignment for the
one-dimensional real numbers. Our coding scheme is based on integer
codebooks, which means that $\widetilde{X}_{k}\in \mathbb{Z}$ for
all $k\in\mathcal{K}$. If some integer signals are aligned at a
receiver, then their effect is similar to a single signal at high
SNR regimes. This is due to the fact that rational numbers form a
filed and therefore the sum of constellations from $\mathbb{Q}$ form
a constellation in $\mathbb{Q}$ with an enlarged cardinality.

Before we present our achievability scheme, we need to define the
rational dimension of a set of real numbers.
\begin{defi}(\emph{Rational Dimension})
The rational dimension of a set of real numbers
$\{\widetilde{h}_{1},\widetilde{h}_{2},...,\widetilde{h}_{K-1},\widetilde{h}_{K}=1\}$
is $M$ if there exists a set of real numbers
$\{g_{1},g_{2},...,g_{M}\}$ such that each $\widetilde{h}_{k}$ can
be represented as a rational combination of $g_{i}$'s, i.e.,
$\widetilde{h}_{k}=a_{k,1}g_{1}+a_{k,2}g_{2}+...+a_{k,M}g_{M}$,
where $a_{k,i}\in\mathbb{Q}$ for all $k\in\mathcal{K}$ and
$i\in\mathcal{M}$.
\end{defi}
In fact, the rational dimension of a set of channel gains is the
effective dimension seen at the corresponding receiver. In
particular,
$\{\widetilde{h}_{1},\widetilde{h}_{2},...,\widetilde{h}_{K}\}$ are
\emph{rationally independent} if the rational dimension is $K$,
i.e., none of the $\widetilde{h}_{k}$ can be represented as the
rational combination of other numbers.

Note that all of the channel gains $\widetilde{h}_{k}$ are generated
independently with a distribution. From the number theory, it is
known that the set of all possible channel gains that are rationally
independent have a Lebesgue measure $1$. Therefore, we can assume
that $\{\widetilde{h}_{1},\widetilde{h}_{2},...,\widetilde{h}_{K}\}$
are rationally independent, almost surely. Our achievability coding
scheme is as follows:
\subsubsection{Encoding}
Each transmitter limits its input symbols to a finite set which is
called the transmit constellation. Even though it has access to the
continuum of real numbers, restriction to a finite set has the
benefit of easy and feasible decoding at the intended receiver. The
transmitter $k$ selects a constellation $\mathcal{V}_{k}$ to send
message $W_{k}$. The constellation points are chosen from integer
points, i.e., $\mathcal{V}_{k}\subset\mathbb{Z}$. We assume that
$\mathcal{V}_{k}$ is a bounded set. Hence, there is a constant
$Q_{k}$ such that $\mathcal{V}_{k}\subset [-Q_{k},Q_{k}]$. The
cardinality of $\mathcal{V}_{k}$ which limits the rate of message
$W_{k}$ is denoted by $\|\mathcal{V}_{k}\|$.

Having formed the constellation, the transmitter $k$ constructs a
random codebook for message $W_{k}$  with rate $R_{k}$. This can be
accomplished by choosing a probability distribution on the input
alphabets. The uniform distribution is the first candidate and it is
selected for the sake of simplicity. Therefore, the stochastic
encoder $k$ generates
$2^{n(I(\widetilde{X}_{k};Y|\widetilde{X}_{(\mathcal{K}-k)^{c}})+\epsilon_{k})}$
independent and identically distributed sequences
$\widetilde{x}_{k}^{n}$ according to the distribution
$P(\widetilde{x}_{k}^{n})=\prod_{i=1}^{n}P(\widetilde{x}_{k,i})$,
where $P(\widetilde{x}_{k,i})$ denotes the probability distribution
function of the uniformly distributed random variable
$\widetilde{x}_{k,i}$ over $\mathcal{V}_{k}$. Next, randomly
distribute these sequences into $2^{nR_{k}}$ bins. Index each of the
bins by $w_{k}\in\{1,2,...,2^{nR_{k}}\}$.

For each user $k\in\mathcal{K}$, to send message $w_{k}$, the
transmitter looks for a $\widetilde{x}_{k}^{n}$ in bin $w_{k}$. The
rates are such that there exist more than one
$\widetilde{x}_{k}^{n}$. The transmitter randomly chooses one of
them and sends $x_{k}^{n}=A\frac{\widetilde{x}_{k}^{n}}{h_{k,e}}$.
The parameter $A$ controls the input power.
\subsubsection{Decoding}
At a specific time, the received signal at the legitimate receiver is as follows:
\begin{IEEEeqnarray}{rl}
Y=A\left[\widetilde{h}_{1}\widetilde{X}_{1}+\widetilde{h}_{2}\widetilde{X}_{2}+...+\widetilde{h}_{K-1}\widetilde{X}_{K-1}+\widetilde{X}_{K}\right]+\widetilde{W}_{1}
\end{IEEEeqnarray}
The legitimate receiver passes the received signal $Y$ through a
hard decoder. The hard decoder looks for a point $\widetilde{Y}$ in
the received constellation
$\mathcal{V}_{r}=A\left[\widetilde{h}_{1}\mathcal{V}_{1}+\widetilde{h}_{2}\mathcal{V}_{2}+...+\widetilde{h}_{K-1}\mathcal{V}_{K-1}+\mathcal{V}_{K}\right]$
which is the nearest point to  the received signal $Y$. Therefore,
the continuous channel changes to a discrete one in which the input
symbols are taken from the transmit constellations $\mathcal{V}_{k}$
and the output symbols belongs to the received constellation
$\mathcal{V}_{r}$. $\widetilde{h}_{k}$'s are rationally independent
which means that the equation
$A\left[\widetilde{h}_{1}X_{1}+\widetilde{h}_{2}X_{2}+...+\widetilde{h}_{K-1}X_{K-1}+X_{K}\right]=0$
has no rational solution. This property implies that any real number
$v_{r}$ belonging to the constellation $\mathcal{V}_{r}$ is uniquely
decomposable as $v_{r}=A\sum_{k=1}^{K}
\widetilde{h}_{k}\widehat{\widetilde{X}}_{k}$. Note that if there
exists another possible decomposition
$\widetilde{v}_{r}=A\sum_{k=1}^{K}
\widetilde{h}_{k}\widehat{\widetilde{X}}_{k}^{'}$, then
$\widetilde{h}_{k}$'s have to be rationally-dependent which is a
contradiction. We call this property as property $\Gamma$. This
property in fact implies that if there is no additive noise in the
channel, then the receiver can decode all the transmitted signals
with zero error probability.

\begin{rem}
In a random environment it is easy to show that the set of channels
gains which are rationally-dependent has a measure of zero with
respect to the Lebesgue measure. Therefore, Property $\Gamma$ is
almost surely satisfied.
\end{rem}

\subsubsection{Error Probability Analysis}

Let $d_{\min}$ denote the minimum distance in the received
constellation $\mathcal{V}_{r}$. Having property $\Gamma$, the
receiver can decode the transmitted signals. Let $V_{r}$ and
$\hat{V}_{r}$ be the transmitted and decoded  symbols, respectively.
The probability of error, i.e., $P_{e}=P(\hat{V}_{r}\neq V_{r})$, is
bounded as follows:
\begin{equation}\label{eq10}
P_{e}\leq Q(\frac{d_{\min}}{2})\leq \exp(-\frac{d_{\min}^{2}}{8})
\end{equation}
where
$Q(x)=\frac{1}{\sqrt{2\pi}}\int_{x}^{\infty}\exp(-\frac{t^{2}}{2})dt$.
Note that finding $d_{\min}$ is not easy in general. Using
Khintchine and Groshev theorems, however, it is possible to lower
bound the minimum distance. Here we explain some background
information for using the theorems of Khintchine and Groshev.

The field of Diophantine approximation in number theory deals with
approximation of real numbers with rational numbers. The reader is
referred to \cite{32,33} and the references therein. The Khintchine
theorem is one of the cornerstones in this field. This theorem
provides a criteria for a given function
$\psi:\mathbb{N}\rightarrow\mathbb{R}_{+}$ and real number $h$, such
that $|p + \widetilde{h}q| < \psi(|q|)$ has either infinitely many
solutions or at most, finitely many solutions for
$(p,q)\in\mathbb{Z}^{2}$. Let $\mathcal{A}(\psi)$ denote the set of
real numbers such that $|p +\widetilde{h}q| < \psi(|q|)$ has
infinitely many solutions in integers. The theorem has two parts.
The first part is the convergent part and states that if $\psi(|q|)$
is convergent, i.e.,
\begin{equation}
\sum_{q=1}^{\infty}\psi(q)< \infty
\end{equation}
then $\mathcal{A}(\psi)$ has a measure of zero with respect to the
Lebesque measure. This part can be rephrased in a more convenient
way, as follows. For almost all real numbers, $|p+\widetilde{h}q| >
\psi(|q|)$ holds for all $(p, q) \in\mathbb{Z}^{2}$ except for
finitely many of them. Since the number of integers violating the
inequality is finite, one can find a constant $c$ such that
\begin{equation}
|p + \widetilde{h}q| > c\psi(|q|)
\end{equation}
holds for all integers $p$ and $q$, almost surely. The divergent part of the theorem states that $\mathcal{A}(\psi)$ has the
full measure, i.e. the set $\mathbb{R} - \mathcal{A}(\psi)$ has measure zero, provided that $\psi$ is decreasing and $\psi(|q|)$ is divergent,
i.e.,
\begin{equation}
\sum_{q=1}^{\infty}\psi(q)=\infty.
\end{equation}
There is an extension to Khintchine's theorem which regards the
approximation of linear forms. Let $\mathbf{\widetilde{h}} =
(\widetilde{h}_{1}, \widetilde{h}_{2},...,\widetilde{h}_{K-1})$ and
$\mathbf{q} = (q_{1}, q_{2},... , q_{K-1})$ denote $(K-1)$-tuples in
$\mathbb{R}^{K-1}$ and $\mathbb{Z}^{K-1}$, respectively. Let
$\mathcal{A}_{K-1}(\psi)$ denote the set of $(K-1)$-tuple real
numbers $\mathbf{\widetilde{h}}$ such that
\begin{equation}
|p + q_{1}\widetilde{h}_{1} + q_{2}\widetilde{h}_{2} +... +q_{K-1}\widetilde{h}_{K-1}| <\psi(|\mathbf{q}|_{\infty})
\end{equation}
has infinitely many solutions for $p\in \mathbb{Z}$ and
$\mathbf{q}\in\mathbb{Z}^{K-1}$. Here, $|\mathbf{q}|_{\infty}$ is
the supreme norm of $\mathbf{q}$ which is defined as $\max_{k}
|q_{k}|$. The following theorem illustrates the Lebesque measure of
the set $\mathcal{A}_{K-1}(\psi)$ [].
\begin{thm}(Khintchine-Groshev)
Let $\psi:\mathbb{N}\rightarrow \mathbb{R}_{+}$. Then, the set $\mathcal{A}_{K-1}(\psi)$ has measure zero provided that
\begin{equation}\label{eq9}
\sum_{q=1}^{\infty}q^{K-2}\psi(q)<\infty
\end{equation}
and has the full measure if
\begin{equation}
\sum_{q=1}^{\infty}q^{K-2}\psi(q)=\infty~~~~\hbox{and $\psi$ is monotonic}
\end{equation}
\end{thm}
In this paper, we are interested in the convergent part of the
theorem. Moreover, given an arbitrary $\epsilon > 0$ the function
$\psi(q) =\frac{1}{q^{K-1+\epsilon}}$ satisfies the condition of
(\ref{eq9}). In fact, the convergent part of the above theorem can
be stated as follows. For almost all $K-1$-tuple real numbers
$\mathbf{\widetilde{h}}$ there exists a constant $c$ such that
\begin{equation}
|p + q_{1}\widetilde{h}_{1} + q_{2}\widetilde{h}_{2} + . . . + q_{K-1}\widetilde{h}_{K-1}| >\frac{c}{(\max_{k} |q_{k}|)^{K-1+\epsilon}}
\end{equation}
holds for all $p \in\mathbb{Z}$ and $\mathbf{q}\in\mathbb{Z}^{K-1}$.
The Khintchine-Groshev theorem can be used to bound the minimum
distance of points in the received constellation $\mathcal{V}_{r}$.
In fact, a point in the received constellation has a linear form of
\begin{equation}
v_{r}
=A\left[\widetilde{h}_{1}v_{1}+\widetilde{h}_{2}v_{2}+...+\widetilde{h}_{K-1}v_{K-1}+v_{K}\right],
\end{equation}
Therefore, we can conclude that
\begin{equation}
d_{\min} >\frac{Ac}{(\max_{k\in\{1,2,...,K-1\}}Q_{k})^{K-1+\epsilon}}.
\end{equation}
The probability of error in hard decoding, see (\ref{eq10}), can be bounded as:
\begin{equation}\label{eq11}
P_{e} < \exp\left(-\frac{(Ac)^{2}}{8(\max_{k\in\{1,2,...,K-1\}}Q_{k})^{2K-2+2\epsilon}}\right)
\end{equation}
Let us assume that $Q_{k}$ for all $k\in\{1,2, . . . ,K-1\}$ is
$Q=\lfloor\widetilde{P}^{\frac{1-\epsilon}{2(K+\epsilon)}}\rfloor$.
Moreover, since $E[\widetilde{X}_{k}^{2}]\leq A^{2}Q_{k}^{2}\leq
\widetilde{P}$, we can choose
$A=\widetilde{P}^{\frac{K-1+2\epsilon}{2(K+\epsilon)}}$.
Substituting in (\ref{eq11}) yields
\begin{equation}
P_{e} < \exp (-\frac{c^{2}}{8}\widetilde{P}^{\epsilon}).
\end{equation}
Thus, $P_{e}\rightarrow 0$ when $\widetilde{P}\rightarrow \infty$ or equivalently $P\rightarrow \infty$.
\subsubsection{Equivocation Calculation}
Since the equivocation analysis of Theorem \ref{th1} is valid for
any input distribution, therefore integer inputs satisfy the perfect
secrecy constraint.
\subsubsection{S-DoF Calculation}
The maximum achievable sum rate is as follows:
\begin{IEEEeqnarray}{rl}\label{eq15}
\sum_{k\in\mathcal{K}}R_{k}&= I(\widetilde{X}_{1},\widetilde{X}_{2},...,\widetilde{X}_{K};Y)-I(\widetilde{X}_{1},\widetilde{X}_{2},...,\widetilde{X}_{K};Z)\\ \nonumber
&=H(\widetilde{X}_{1},\widetilde{X}_{2},...,\widetilde{X}_{K}|Z)-H(\widetilde{X}_{1},\widetilde{X}_{2},...,\widetilde{X}_{K}|Y)\\ \nonumber
&\stackrel{(a)}{\geq}H(\widetilde{X}_{1},\widetilde{X}_{2},...,\widetilde{X}_{K}|Z)-1-P_{e}\log\|\widetilde{\mathcal{X}}\|\\ \nonumber
&\stackrel{(b)}{\geq}H(\widetilde{X}_{1},\widetilde{X}_{2},...,\widetilde{X}_{K}|\sum_{k\in\mathcal{K}}\widetilde{X}_{k})-1-P_{e}\log\|\widetilde{\mathcal{X}}\|\\ \nonumber
&\stackrel{(c)}{=}\sum_{k\in\mathcal{K}}H(\widetilde{X}_{k})-H(\sum_{k\in\mathcal{K}}\widetilde{X}_{k})-1-P_{e}\log\|\widetilde{\mathcal{X}}\|\\ \nonumber
&\stackrel{(d)}{=}K\log(2Q+1)-\log(2KQ+1)-1-P_{e}\log\|\widetilde{\mathcal{X}}\|,
\end{IEEEeqnarray}
where $(a)$ follows from Fano's inequality, $(b)$ follows from the
fact that conditioning always decreases entropy, $(c)$ follows from
chain rule, and $(d)$ follows from the fact that $\widetilde{X}_{k}$
has uniform distribution over $\mathcal{V}_{k}=[-Q,Q]$. The S-DoF
therefore can be computed as follows:
\begin{IEEEeqnarray}{rl}
\eta&=\lim_{P\rightarrow \infty}\frac{\sum_{k\in\mathcal{K}}R_{k}}{\frac{1}{2}\log P}\\ \nonumber
&=\frac{(K-1)(1-\epsilon)}{K+\epsilon}
\end{IEEEeqnarray}
Since $\epsilon$ can be arbitrarily small, then $\eta=\frac{K-1}{K}$
is indeed achievable.
\end{proof}
%\begin{rem}
%Note that the maximum total achievable S-DoF for the Gaussian single antenna multiple-access-channel is $1$ and the above theorem provides the upper-bound of $1$ for almost all channel gains, almost surely when $K\rightarrow\infty$.
%\end{rem}
As we saw in the previous section, multiple-antennas (or
equivalently, time-varying and/or frequency-selective channels)
provide enough freedom, which allows us to choose appropriate
signaling directions to separate between the messages at the
intended receiver and at the same time pack the signals into a low
dimensionality subspace at the eavesdropper. In contrary, it was
commonly believed that time-invariant frequency flat single-antenna
channels cannot provide any degrees-of-freedom. In Theorem,
\ref{th4} however, we developed a machinery that transforms the
single-antenna systems to a pseudo multiple-antenna system with some
antennas. The number of available dimensions in the equivalent
pseudo multiple-antenna systems is $K$ when all of $K$ channel gains
between the transmitters and the intended receiver are
rationally-independent (this condition is satisfied almost surely).
The equivalent pseudo multiple-antenna system can simulate the
behavior of a multiple-dimensional system (in time/frequency/space)
and allows us to simultaneously separate the signals at the intended
receiver and align them to the eavesdropper.

Note that in the MISOSE wire-tap channel (Multiple-Input
Single-Output Single-Eavesdropper), when the channel realization is
unique, we can achieve the optimum S-DoF of $1$ through cooperation
among transmitters. This Theorem clarifies the fact that, we lose
the amount of $\frac{1}{K}$ in S-DoF due to the lack of cooperation
between the transmitters. We still gain through the possibility of
signal alignment at the eavesdropper, however.
\subsection{Rationally-Dependent Channel Gains: Multiple-layer coding}
When the channel gains are rationally dependent, then a more
sophisticated multiple-layer constellation design is required to
achieve higher S-DoF. The reason is that some messages share the
same dimension at the intended receiver and as a result, splitting
them requires more structure in constellations. We propose a
multiple-layer constellation that can not only be distinguished at
the intended receiver but are also packed efficiently at the
eavesdropper. This is accomplished by allowing a carry-over from
different levels. In this subsection, for the sake of simplicity, we
consider a two user-secure MAC. This channel is modeled as follows:
\begin{IEEEeqnarray}{rl}
Y&=A\left[\widetilde{h}_{1}\widetilde{X}_{1}+\widetilde{X}_{2}\right]+\widetilde{W}_{1}\\ \nonumber
Z&=A\left[\widetilde{X}_{1}+\widetilde{X}_{2}\right]+\widetilde{W}_{2},
\end{IEEEeqnarray}
where $A$ controls the output power. When the channel gain
$\widetilde{h}_{1}$ is irrational then according to Theorem
\ref{th4} the total S-DoF of $\frac{1}{2}$ is indeed achievable. For
the rational channel gain
$\widetilde{h}_{1}=\frac{n}{m},~n,m\in\mathbb{N}, m\neq 0 $,
however, the coding scheme of Theorem \ref{th4} fails and we need to
use a multiple-layer coding scheme.

In multiple-layer coding scheme, we select the constellation points
in the base $W\in\mathbb{N}$ as follows:
\begin{IEEEeqnarray}{rl}
v(\mathbf{b})=\sum_{l=0}^{L-1}b_{l}W^{l},
\end{IEEEeqnarray}
where, $b_{l}$ for all $l\in\{0,1,...,L-1\}$ are independent random
variables which take value from $\{0,1,2,...,,a-1\}$ with uniform
probability distribution. $\mathbf{b}$ represents the vector
$\mathbf{b}=\{b_{0},b_{1},...,b_{L-1}\}$ and the parameter $a$
controls the number of constellation points. We assume that $a<W$
and therefore, all constellation points are distinct and the size of
the constellations are $|\mathcal{V}_{1}|=|\mathcal{V}_{2}|=a^{L}$.
The maximum possible rate for each user is therefore bounded by
$L\log a$.

At each transmitter a random codebook is generated by randomly
choosing $b_{l}$ according to a uniform distribution. The signal
transmitted by users $1$ and $2$ are
$\widetilde{X}_{1}=v(\mathbf{b})$ and
$\widetilde{X}_{2}=v(\mathbf{b^{'}})$, respectively. Note that the
above multiple-layer constellation had a DC component and this
component needs to be removed at the transmitters. The DC component,
however, duplicated the achievable rates and has no effect on the
S-DoF.

To calculate $A$, since $b_{l}$ and $b_{j}$ are independent for $l \neq j$, we have the following chain of inequalities:
\begin{IEEEeqnarray}{rl}
A^{2}E[\widetilde{X}_{1}^{2}]&=A^{2}W^{2(L-1)}\sum_{l=0}^{L-1}E[b_{l}^{2}]W^{-2l}\\ \nonumber
&\leq A^{2}W^{2(L-1)}\frac{(a-1)(2a-1)}{6}\sum_{l=0}^{\infty}W^{-2l}\\ \nonumber
&\leq A^{2}W^{2(L-1)}\frac{a^{2}}{3}\frac{1}{1-W^{-2}}\\ \nonumber
&\leq \frac{A^{2}a^{2}W^{2L}}{W^{2}-1}.
\end{IEEEeqnarray}
Therefore, by choosing $A=\frac{\sqrt{(W^{2}-1)\widetilde{P}}}{aW^{L}}$ the power constraint $A^{2}E{\widetilde{X_{1}^{2}}}\leq \widetilde{P}$ is satisfied. The received constellation at the intended receiver and the eavesdropper can be written as follows, respectively:
\begin{IEEEeqnarray}{rl}
Y&=\frac{A}{m}\sum_{l=0}^{L-1}\left(nb_{l}+mb^{'}_{l}\right)W^{l}+\widetilde{W}_{1}\\ \nonumber
Z&=A\sum_{l=0}^{L-1}\left(b_{l}+b^{'}_{l}\right)W^{l}+\widetilde{W}_{2}.
\end{IEEEeqnarray}
A point in the received constellation $\mathcal{V}_{r}$ of the intended receiver can be represented as follows:
\begin{IEEEeqnarray}{rl}
v_{r}(\mathbf{b},\mathbf{b^{'}})=\frac{A}{m}\sum_{l=0}^{L-1}\left(nb_{l}+mb^{'}_{l}\right)W^{l}.
\end{IEEEeqnarray}
Note that the received constellation needs to satisfy the property
$\Gamma$, as the intended receiver needs to uniquely decode the
transmitted signals. The following theorem characterizes the total
achievable S-DoF.
\begin{thm}\label{th5}
The following S-DoF is achievable for the two user single antenna MAC with rational channel gain $\widetilde{h}_{1}=\frac{n}{m}$:
\begin{IEEEeqnarray}{rl}
\eta=\left\{
       \begin{array}{ll}
         \frac{\log(n)}{\log(n(2n-1))}, & \hbox{if}~2n\geq m \\
         \frac{\log(s+1)}{\log((s+1)(2s+1))}, & \hbox{if}~2n<m ~\hbox{and}~m=2s+1 \\
         \frac{\log(s)}{\log(2s^{2}-n)}, & \hbox{if}~2n<m ~\hbox{and}~m=2s
       \end{array}
     \right.
\end{IEEEeqnarray}
\end{thm}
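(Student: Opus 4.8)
The plan is to establish achievability by the same five-step template used for Theorem \ref{th4} — encoding, decoding, error-probability, equivocation, and S-DoF computation — now driven by the base-$W$ multi-layer constellation already fixed above. The two genuine design parameters are the per-level alphabet size $a$ and the base $W$; the number of levels $L$ is then taken to grow with the power so that $A$ stays bounded away from zero, which forces $L\log W\sim\tfrac12\log\widetilde P$. Consequently the S-DoF reduces to the ratio
\begin{equation*}
\eta=\frac{\text{secure sum rate per level}}{\log W},
\end{equation*}
and the entire proof comes down to choosing $(a,W)$, as functions of $n$ and $m$, that maximise this ratio in each regime. I expect the three displayed expressions to be exactly the three optimal values of this ratio.

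First I would handle decoding at the intended receiver. Its noiseless observation is $\tfrac{A}{m}\sum_{l}(nb_{l}+mb'_{l})W^{l}$, so property $\Gamma$ amounts to injectivity of $(\mathbf{b},\mathbf{b}')\mapsto\sum_{l}(nb_{l}+mb'_{l})W^{l}$, i.e. to showing that
\begin{equation*}
\sum_{l=0}^{L-1}\bigl(nu_{l}+mv_{l}\bigr)W^{l}=0,\qquad |u_{l}|\le a-1,\ |v_{l}|\le a-1,
\end{equation*}
admits only the trivial solution $u_{l}=v_{l}=0$. Because we want $a$ as large as possible (more input entropy) while keeping $W$ as small as possible (smaller denominator), the design deliberately operates in a regime where $nb_{l}+mb'_{l}$ can exceed $W$, so carries between levels do occur; this is the ``carry-over from different levels'' noted above, and it is where the integer arithmetic of $n$, $m$ and $W$ must be controlled. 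Having secured injectivity I would lower-bound the minimum distance of the received constellation by a constant multiple of $A/m$ and invoke $P_{e}\le\exp(-d_{\min}^{2}/8)$ from (\ref{eq10}); since $A\to\infty$ as $\widetilde P\to\infty$, this gives $P_{e}\to 0$.

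Next I would treat secrecy. The equivocation argument of Theorem \ref{th1} holds for any input distribution, so perfect secrecy is inherited, and the secure sum rate obeys
\begin{equation*}
R_{1}+R_{2}\ge H(\widetilde X_{1})+H(\widetilde X_{2})-H(\widetilde X_{1}+\widetilde X_{2})-o(L),
\end{equation*}
using that the eavesdropper's noiseless signal $A(\widetilde X_{1}+\widetilde X_{2})=A\sum_{l}(b_{l}+b'_{l})W^{l}$ aligns both users into the single sum direction. With $H(\widetilde X_{k})=L\log a$ and $H(\widetilde X_{1}+\widetilde X_{2})$ bounded by the log-cardinality of the support of the digit-sums $b_{l}+b'_{l}$, this produces a clean per-level secure rate, and dividing by $\log W$ yields $\eta$.

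The main obstacle is the joint optimisation in the last step together with the decodability analysis in the carry regime. One must determine, for each arithmetic relation between $n$ and $m$, the largest $a$ for which the injectivity/minimum-distance condition survives the carries and, simultaneously, the smallest admissible $W$; the three displayed formulas are precisely the outcomes of this optimisation. The split into cases reflects the integer arithmetic: whether the gain $n/m$ lies above or below $\tfrac12$ (the comparison $2n\gtrless m$) and the parity of $m$ (the $m=2s+1$ versus $m=2s$ dichotomy) change which alphabet size is extremal, and hence change the floor/ceiling appearing in the numerator $\log\max\{n,\lceil m/2\rceil\}$ and in the base $W$. Verifying that the proposed $(a,W)$ genuinely satisfy property $\Gamma$ in the presence of carries — rather than merely in the easy no-carry regime $W>(n+m)(a-1)$ — is the delicate number-theoretic core of the argument, and I would spend most effort there.
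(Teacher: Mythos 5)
Your outline follows the paper's proof template exactly: the same error analysis via a minimum distance of order $A/m$ plugged into (\ref{eq10}), the same choice of $L$ growing like $\frac{(1/2-\epsilon)\log\widetilde{P}}{\log W}$ so that $P_{e}\rightarrow 0$, the same inheritance of perfect secrecy from the equivocation computation of Theorem \ref{th1}, and the same reduction of the S-DoF to the per-level ratio $\log a/\log W$, to be optimized over $(a,W)$ subject to property $\Gamma$. You also correctly recognized that the scheme must operate in the carry regime (insisting on the no-carry condition $W>(n+m)(a-1)$ would only worsen the ratio), which is indeed where the difficulty lies.

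The genuine gap is that you never produce the objects the theorem is actually about. The three displayed formulas are the ratios $\log a/\log W$ for three \emph{specific} parameter choices --- $(a,W)=(n,\,n(2n-1))$ when $2n\geq m$; $(a,W)=(s+1,\,(s+1)(2s+1))$ when $m=2s+1>2n$; $(a,W)=(s,\,2s^{2}-n)$ when $m=2s>2n$ --- and the entire mathematical content of the result is (i) exhibiting these choices and (ii) proving that property $\Gamma$ holds for them despite the carries. You assert that the formulas ``are precisely the outcomes of this optimisation'' but carry out neither the optimization nor the verification; your guess that the numerator is $\log\max\{n,\lceil m/2\rceil\}$ is consistent with the paper, but nothing in your outline produces the denominators, in particular the non-obvious base $W=2s^{2}-n$ in the even case. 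The paper settles exactly this point in Lemma \ref{lm3} (Appendix C) by induction on $L$: peeling off the lowest digit reduces injectivity to showing that $n(b_{0}-\widetilde{b}_{0})+m(b^{'}_{0}-\widetilde{b}^{'}_{0})=cW$ has no nontrivial solution, first for $c=0$ (using $(m,n)=1$ and the digit bounds $|b_{0}-\widetilde{b}_{0}|\leq a-1$), then for $c\neq 0$ (using that $n$, respectively $m$, respectively the tailored form $2s^{2}-n$, divides the relevant terms, plus size bounds that force $|c|=1$ in Case 3). Without this argument the theorem is unproved. A secondary caution: your rate bound $R_{1}+R_{2}\geq 2L\log a-H(\widetilde{X}_{1}+\widetilde{X}_{2})-o(L)$ with $H(\widetilde{X}_{1}+\widetilde{X}_{2})$ replaced by the support bound $L\log(2a-1)$ yields a per-level secure rate $2\log a-\log(2a-1)$, which is \emph{strictly less} than the $\log a$ appearing in the theorem, and this deficit does not vanish with $P$ since $a$ and $W$ are fixed constants determined by $n$ and $m$; so the step you call ``clean'' does not, as stated, reach the claimed constants (the paper itself elides this by writing the rate as $L\log(a)$ when invoking (\ref{eq15})), and it would need a sharper accounting of $H(\widetilde{X}_{1}+\widetilde{X}_{2})$ rather than a cardinality bound.
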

\begin{proof}
Let us first assume that the property $\Gamma$ is satisfied for given $W$ and $a$. It is easy to show that the minimum distance in the received constellation $\mathcal{V}_{r}$ is $d_{\min}=\frac{A}{m}$. The probability of error is therefore bounded as follows:
\begin{IEEEeqnarray}{rl}
P_{e}&\leq \exp(-\frac{d^{2}_{\min}}{8})\\ \nonumber
&=\exp\left(-\frac{(W^{2}-1)\widetilde{P}}{8a^{2}m^{2}W^{2L}}\right).
\end{IEEEeqnarray}
Let us choose $L$ as
\begin{IEEEeqnarray}{rl}
L=\left\lfloor\frac{\log(\widetilde{P}^{0.5-\epsilon})}{\log(W)}\right\rfloor,
\end{IEEEeqnarray}
where $\epsilon>0$ is an arbitrary small constant. Clearly, with this choice of $L$, $P_{e}\leq \exp(-\gamma \widetilde{P}^{2\epsilon})$ where $\gamma$ is a constant. Thus, when $SNR\rightarrow\infty$, then $P_{e}\rightarrow 0$. Using (\ref{eq15}), the S-DoF of the system can be derived as follows:
\begin{IEEEeqnarray}{rl}
\eta&=\lim_{\widetilde{P}\rightarrow\infty}\frac{L\log(a)}{\frac{1}{2}\log\widetilde{P}}\\ \nonumber
&=\lim_{\widetilde{P}\rightarrow\infty}\frac{\left\lfloor\frac{\log(\widetilde{P}^{0.5-\epsilon})}{\log(W)}\right\rfloor\log(a)}{\frac{1}{2}\log\widetilde{P}}\\ \nonumber &=(1-2\epsilon)\frac{\log (a)}{\log(W)}.
\end{IEEEeqnarray}
Since $\epsilon$ is an arbitrary small constant, the total S-DoF of the system is
\begin{IEEEeqnarray}{rl}
\eta=\frac{\log(a)}{\log(W)}.
\end{IEEEeqnarray}
This equation implies that to achieve the maximum possible $\eta$, we need to maximize $a$ and minimize $W$ with the constraint that the property $\Gamma$ is satisfied. Table I shows the choices of $a$ and $W$ for Theorem \ref{th5}. To complete the proof we need to show that with the choices of Table I, the property $\Gamma$ is satisfied.
\begin{lem}\label{lm3}
The property $\Gamma$ holds for all the choices of Table I.
\end{lem}
\begin{proof}
Please see Appendix $C$.
\end{proof}
\begin{table}[t]
\caption{ Chosen $a$ and $W$ to satisfy property $\Gamma$}
\centering
\begin{center}
\begin{tabular}{|l|c|c|c|}
  \hline
  % after \\: \hline or \cline{col1-col2} \cline{col3-col4} ...
   & $\widetilde{h}_{1}=\frac{n}{m}$ & $a$ & $W$ \\
\hline
  \hbox{Case 1} & $2n\geq m$ & $n$ & $n(2n-1)$ \\
  \hline
Case 2 & $2n<m~$\hbox{and}$~m=2s+1$ & $s+1$ & $(s+1)(2s+1)$ \\
\hline
  Case 3 & $2n<m~$\hbox{and}$~m=2s$ & $s$ & $2s^{2}-n$ \\
  \hline
\end{tabular}
\end{center}
\end{table}
\end{proof}
Note that this result implies that the total achievable S-DoF by
using integer lattice codes is discontinuous with respect to the
channel coefficients.
%\section{Real Alignment for the Multiple-Antenna Multiple-Access-Channel}
%\section{Secure DoF of the Compound Gaussian Multiple-Access-Channel}
\section{Conclusion}
In this paper, we considered a $K$ user secure Gaussian MAC with an
external eavesdropper. We proved an achievable rate region for the
secure discrete memoryless MAC and thereafter we established  the
secrecy sum capacity of the degraded Gaussian MIMO MAC using
Gaussian codebooks. For the non-degraded Gaussian MIMO MAC, we
proposed an algorithm inspired by the interference alignment
technique to achieve the largest possible total S-DoF. When all the
terminals are equipped with single antenna, the Gaussian codebooks
lead to zero S-DoF. Therefore, we proposed a novel secure coding
scheme to achieve positive S-DoF in the single antenna MAC. This
scheme converts the single-antenna system into a multiple-dimension
system with fractional dimensions. The achievability scheme is based
on the alignment of signals into a small sub-space at the
eavesdropper, and the simultaneous separation of the signals at the
intended receiver. We proved that total S-DoF of $\frac{K-1}{K}$ can
be achieved for almost all channel gains which  are rationally
independent. For the rationally dependent channel gains, we
illustrated the power of the multi-layer coding scheme, through an
example channel, to achieve a positive S-DoF. As a function of
channel gains, therefore, we showed that the achievable S-DoF is
discontinues.
\appendix
\subsection{Proof of Theorem \ref{th1}}
1) \textit{Codebook Generation}:
 The structure of the encoder for user $k\in\mathcal{K}$ is
depicted in Fig.%\ref{f2}.
%\begin{figure}
%\centerline{\includegraphics[scale=.5]{alf2.eps}} \caption{The
%Stochastic Encoder} \label{f2}
%\end{figure}
 Fix $P(u_{k})$ and $P(x_{k}|u_{k})$. The stochastic
encoder $k$ generates $2^{n(I(U_{k};Y|U_{(\mathcal{K}-k)^{c}})+\epsilon_{k})}$ independent and
identically distributed sequences $u_{k}^{n}$ according to the
distribution $P(u_{k}^{n})=\prod_{i=1}^{n}P(u_{k,i})$. Next,
randomly distribute these sequences into $2^{nR_{k}}$ bins. Index each of the
bins by $w_{k}\in\{1,2,...,2^{nR_{k}}\}$.

2) \textit{Encoding}: For each user $k\in\mathcal{K}$, to send
message $w_{k}$, the transmitter looks for a $u_{k}^{n}$ in bin
$w_{k}$. The rates are such that there exist more than one
$u_{k}^{n}$. The transmitter randomly chooses one of them and then
generates $x_{k}^{n}$ according to
$P(x_{k}^{n}|u_{k}^{n})=\prod_{i=1}^{n}P(x_{k,i}|u_{k,i})$ and sends
it.

3) \textit{Decoding}: The received signals at the legitimate
receiver, $y^{n}$, is the output of the
channel $P(y^{n}|x_{\mathcal{K}}^{n})=\prod_{i=1}^{n}P(y_{i}|x_{\mathcal{K},i})$.
The legitimate receiver looks for the unique sequence $u_{\mathcal{K}}^{n}$ such
that $(u_{\mathcal{K}}^{n},y^{n})$ is jointly typical and declares the
indices of the bins containing $u_{k}^{n}$ as the messages received.

4) \textit{Error Probability Analysis}: Since the region of
(\ref{eq2}) is a subset of the capacity region of the multiple-access-channel without secrecy constraint, then the error probability analysis is the same as \cite{3} and omitted here.

5) \textit{Equivocation Calculation}: To satisfy the perfect secrecy constraint, we need to prove the requirement of (\ref{eq1}). From $H(W_{\mathcal{K}}|Z^{n})$ we have
\begin{IEEEeqnarray}{rl}
H(W_{\mathcal{K}}|Z^{n}) &= H(W_{\mathcal{K}},Z^{n})-H(Z^{n})\\
\nonumber &= H(W_{\mathcal{K}},U_{\mathcal{K}}^{n},Z^{n})- H(U_{\mathcal{K}}^{n}|W_{\mathcal{K}},Z^{n})-H(Z^{n})\\
\nonumber &=H(W_{\mathcal{K}},U_{\mathcal{K}}^{n})+ H(Z^{n}|W_{\mathcal{K}},U_{\mathcal{K}}^{n})- H(U_{\mathcal{K}}^{n}|W_{\mathcal{K}},Z^{n})-H(Z^{n})\\
\nonumber &\stackrel{(a)}{\geq}
H(W_{\mathcal{K}},U_{\mathcal{K}}^{n})+H(Z^{n}|W_{\mathcal{K}},U_{\mathcal{K}}^{n})-n\epsilon_{n}-H(Z^{n})\\
\nonumber
&\stackrel{(b)}{=}H(W_{\mathcal{K}},U_{\mathcal{K}}^{n})+H(Z^{n}|U_{\mathcal{K}}^{n})-n\epsilon_{n}-H(Z^{n}) \\
\nonumber&\stackrel{(c)}{\geq}
H(U_{\mathcal{K}}^{n})+H(Z^{n}|U_{\mathcal{K}}^{n})- n\epsilon_{n}-H(Z^{n}) \\
\nonumber &= H(U_{\mathcal{K}}^{n})- I(U_{\mathcal{K}}^{n};Z^{n})- n\epsilon_{n}\\
\nonumber &\stackrel{(d)}{\geq} I(U_{\mathcal{K}}^{n};Y^{n})-I(U_{\mathcal{K}}^{n};Z^{n})- n\epsilon_{n}\\
\nonumber &\stackrel{(e)}{\geq} n\sum_{k\in\mathcal{K}}R_{k}-n\epsilon_{n}-n\delta_{1n}-n\delta_{4n}\\ \nonumber &= H(W_{\mathcal{K}})-n\epsilon_{n}-n\delta_{1n}-n\delta_{4n},
\end{IEEEeqnarray}
where $(a)$ follows from Fano's inequality, which states that for
sufficiently large $n$, $H(U_{\mathcal{K}}^{n}|W_{\mathcal{K}},Z^{n})$
$\leq h(P_{we}^{(n)})$ $+nP_{we}^{n}R_{w}\leq n\epsilon_{n}$. Here
$P_{we}^{n}$ denotes the wiretapper's error probability of decoding
$u_{\mathcal{K}}^{n}$ in the case that the bin numbers $w_{\mathcal{K}}$ are known to the eavesdropper and
$R_{w}=I(U_{\mathcal{K}};Z)$.
Since the sum rate is small enough, then $P_{we}^{n}\rightarrow 0$
for sufficiently large $n$. $(b)$ follows from the following Markov
chain: $W_{\mathcal{K}}\rightarrow U_{\mathcal{K}}^{n}\rightarrow$
$Z^{n}$. Hence, we have
$H(Z^{n}|W_{\mathcal{K}},U_{\mathcal{K}}^{n})=H(Z^{n}|U_{\mathcal{K}}^{n})$.
$(c)$ follows from the fact that
$H(W_{\mathcal{K}},U_{\mathcal{K}}^{n})\geq H(U_{\mathcal{K}}^{n})$.
$(d)$ follows from that fact that $H(U_{\mathcal{K}}^{n})\geq
I(U_{\mathcal{K}}^{n};Y^{n})$. $(e)$ follows from the following lemma:

\begin{lem}
Assume $U_{\mathcal{K}}^{n}$, $Y^{n}$ and $Z^{n}$ are generated according to the achievability scheme of Theorem \ref{th1}, then we have,
\begin{IEEEeqnarray}{rl}\label{eq3}
nI(U_{\mathcal{K}};Y)-n\delta_{1n}\leq I(U_{\mathcal{K}}^{n};Y^{n})\leq nI(U_{\mathcal{K}};Y)+n\delta_{2n}\\ \label{eq4}
nI(U_{\mathcal{K}};Z)-n\delta_{3n}\leq I(U_{\mathcal{K}}^{n};Z^{n})\leq nI(U_{\mathcal{K}};Z)+n\delta_{4n},
\end{IEEEeqnarray}
where, $\delta_{1n},\delta_{2n},\delta_{3n},\delta_{4n}\rightarrow 0$ when $n\rightarrow\infty$.
\begin{proof}
We first prove (\ref{eq3}). Let $A^{(n)}_{\epsilon}(P_{U_{\mathcal{K}},Z})$ denote the set of typical sequences $(u_{\mathcal{K}}^{n},z^{n})$ with respect to $P_{U_{\mathcal{K}},Z}$, and
\begin{IEEEeqnarray}{lr}
\zeta=\left\{
        \begin{array}{ll}
          1, & (u_{\mathcal{K}}^{n},z^{n})\in A^{(n)}_{\epsilon} \\
          0, & \hbox{otherwise}
        \end{array}
      \right.
\end{IEEEeqnarray}
be the corresponding indicator function. We expand $I(U_{\mathcal{K}}^{n};Z^{n},\zeta)$ and $I(U_{\mathcal{K}}^{n},\zeta;Z^{n})$ as follow:
\begin{IEEEeqnarray}{rl}
I(U_{\mathcal{K}}^{n};Z^{n},\zeta)&=I(U_{\mathcal{K}}^{n};Z^{n})+I(U_{\mathcal{K}}^{n};\zeta|Z^{n})\\ \nonumber
&=I(U_{\mathcal{K}}^{n};\zeta)+I(U_{\mathcal{K}}^{n};Z^{n}|\zeta),
\end{IEEEeqnarray}
\begin{IEEEeqnarray}{rl}
I(U_{\mathcal{K}}^{n},\zeta;Z^{n})&=I(U_{\mathcal{K}}^{n};Z^{n})+I(\zeta;Z^{n}|U_{\mathcal{K}}^{n})\\ \nonumber
&=I(\zeta;Z^{n})+I(U_{\mathcal{K}}^{n};Z^{n}|\zeta).
\end{IEEEeqnarray}
Therefore, we have
\begin{IEEEeqnarray}{rl}
I(U_{\mathcal{K}}^{n};Zn|\zeta)-I(U_{\mathcal{K}}^{n};\zeta|Z^{n})\leq I(U_{\mathcal{K}}^{n};Z^{n})\leq I(U_{\mathcal{K}}^{n};Z^{n}|\zeta)+I(\zeta;Z^{n}).
\end{IEEEeqnarray}
Note that $I(\zeta;Z^{n})\leq H(\zeta)\leq 1$ and $I(U_{\mathcal{K}}^{n};\zeta|Z^{n})\leq H(\zeta|Z^{n})\leq H(\zeta)\leq 1$. Thus, the above inequality implies that
\begin{IEEEeqnarray}{rl}\label{eq8}
\sum_{j=0}^{1}P(\zeta=j)I(U_{\mathcal{K}}^{n};Zn|\zeta=j)-1\leq
I(U_{\mathcal{K}}^{n};Z^{n})\leq
\sum_{j=0}^{1}P(\zeta=j)I(U_{\mathcal{K}}^{n};Z^{n}|\zeta=j)+1.
\end{IEEEeqnarray}
According to the joint typicality property, we have
\begin{IEEEeqnarray}{rl}\label{eq7}
0\leq P(\zeta=1)I(U_{\mathcal{K}}^{n};Z^{n}|\zeta=1)\leq nP((u_{\mathcal{K}}^{n},z^{n})\in A^{(n)}_{\epsilon}(P_{U_{\mathcal{K}},Z} )) \log\|\mathcal{Z}\|\leq n\epsilon_{n}\log \|\mathcal{Z}\|.
\end{IEEEeqnarray}
Now consider the term $P(\zeta=0)I(U_{\mathcal{K}}^{n};Z^{n}|\zeta=0)$. Following the sequence joint typicality properties, we have
\begin{IEEEeqnarray}{rl}\label{eq6}
(1-\epsilon_{n})I(U_{\mathcal{K}}^{n};Z^{n}|\zeta=0)\leq P(\zeta=0)I(U_{\mathcal{K}}^{n};Z^{n}|\zeta=0)\leq I(U_{\mathcal{K}}^{n};Z^{n}|\zeta=0),
\end{IEEEeqnarray}
where
\begin{IEEEeqnarray}{rl}
I(U_{\mathcal{K}}^{n};Z^{n}|\zeta=0)=\sum_{(u_{\mathcal{K}}^{n},z^{n})\in A^{(n)}_{\epsilon}}P(u_{\mathcal{K}}^{n},z^{n})\log\frac{P(u_{\mathcal{K}}^{n},z^{n})}{P(u_{\mathcal{K}}^{n})P(z^{n})}.
\end{IEEEeqnarray}
Since $H(U_{\mathcal{K}},Z)-\epsilon_{n}\leq -\frac{1}{n}\log P(u_{\mathcal{K}}^{n},z^{n})\leq H(U_{\mathcal{K}},Z)+\epsilon_{n}$, then we have,
\begin{IEEEeqnarray}{rl}
n\left[-H(U_{\mathcal{K}},Z)+H(U_{\mathcal{K}})+H(Z)-3\epsilon_{n}\right]\leq I(U_{\mathcal{K}}^{n};Z^{n}|\zeta=0)\leq n\left[-H(U_{\mathcal{K}},Z)+H(U_{\mathcal{K}})+H(Z)+3\epsilon_{n}\right],
\end{IEEEeqnarray}
or equivalently,
\begin{IEEEeqnarray}{rl}\label{eq5}
n\left[I(U_{\mathcal{K}};Z)-3\epsilon_{n}\right]\leq I(U_{\mathcal{K}}^{n};Z^{n}|\zeta=0)\leq n\left[I(U_{\mathcal{K}};Z)+3\epsilon_{n}\right].
\end{IEEEeqnarray}
By substituting (\ref{eq5}) into (\ref{eq6}) and then substituting (\ref{eq6}) and (\ref{eq7}) into (\ref{eq8}), we get the desired result,
\begin{IEEEeqnarray}{rl}
nI(U_{\mathcal{K}};Z)-n\delta_{1n}\leq I(U_{\mathcal{K}}^{n};Z^{n})\leq nI(U_{\mathcal{K}};Z)+n\delta_{2n},
\end{IEEEeqnarray}
where
\begin{IEEEeqnarray}{lr}
\delta_{1n}=\epsilon_{n}I(U_{\mathcal{K}};Z)+3(1-\epsilon_{n})\epsilon_{n}+\frac{1}{n}\\ \nonumber
\delta_{2n}=3\epsilon_{n}+\epsilon_{n}\log\|\mathcal{Z}\|+\frac{1}{n}.
\end{IEEEeqnarray}
Following the same steps, one can prove (\ref{eq4}).
\end{proof}
\end{lem}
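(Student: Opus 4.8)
The plan is to prove the sandwich for $I(U_{\mathcal{K}}^{n};Z^{n})$ in (\ref{eq4}); the bound (\ref{eq3}) for $Y^{n}$ is word-for-word identical with $Z$ replaced by $Y$, so I would carry out only the $Z$ case explicitly and remark that the $Y$ case follows \emph{mutatis mutandis}. The conceptual point is that, although the transmitted sequence $U_{\mathcal{K}}^{n}$ produced by the random-binning scheme is uniform over a fixed finite codebook and hence not exactly i.i.d.\ under $\prod_{i}P(u_{\mathcal{K},i})$, it is typical enough that the per-letter mutual information to the eavesdropper concentrates around the single-letter value $I(U_{\mathcal{K}};Z)$. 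I would make this precise with a joint-typicality indicator.

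First I would introduce the indicator $\zeta$ which equals $0$ when $(U_{\mathcal{K}}^{n},Z^{n})$ lies in the jointly typical set $A^{(n)}_{\epsilon}(P_{U_{\mathcal{K}},Z})$ and $1$ otherwise. Expanding $I(U_{\mathcal{K}}^{n};Z^{n},\zeta)$ and $I(U_{\mathcal{K}}^{n},\zeta;Z^{n})$ by the chain rule in the two possible orders, and using that every term in which $\zeta$ appears alone is bounded by $H(\zeta)\le 1$, I would sandwich the target $I(U_{\mathcal{K}}^{n};Z^{n})$ between $I(U_{\mathcal{K}}^{n};Z^{n}\mid\zeta)$ plus or minus a single bit. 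This isolates the conditional mutual information, which I then split over the two values of $\zeta$.

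Next I would dispose of the atypical branch: since $P(\zeta=1)\to 0$ by the AEP, the crude bound $I(U_{\mathcal{K}}^{n};Z^{n}\mid\zeta=1)\le n\log\|\mathcal{Z}\|$ makes the atypical contribution at most $n\epsilon_{n}\log\|\mathcal{Z}\|$, which is negligible. For the typical branch I would invoke the defining inequalities of the typical set: on $\zeta=0$ each of $-\tfrac{1}{n}\log P(u_{\mathcal{K}}^{n},z^{n})$, $-\tfrac{1}{n}\log P(u_{\mathcal{K}}^{n})$, and $-\tfrac{1}{n}\log P(z^{n})$ sits within $\epsilon_{n}$ of $H(U_{\mathcal{K}},Z)$, $H(U_{\mathcal{K}})$, and $H(Z)$ respectively, so the log-likelihood ratio inside $I(U_{\mathcal{K}}^{n};Z^{n}\mid\zeta=0)$ is squeezed to within $3n\epsilon_{n}$ of $nI(U_{\mathcal{K}};Z)$. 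Folding the factor $P(\zeta=0)\in[1-\epsilon_{n},1]$ back in and collecting every vanishing contribution into $\delta_{3n}$ and $\delta_{4n}$ then yields the claim.

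The step I expect to be the main obstacle is justifying that the typicality machinery actually applies to the sequences emitted by the scheme, i.e.\ that $(U_{\mathcal{K}}^{n},Z^{n})$ really does concentrate on $A^{(n)}_{\epsilon}(P_{U_{\mathcal{K}},Z})$ so that $P(\zeta=1)\to 0$ and the three AEP entropy estimates hold. The codewords are generated i.i.d.\ and then binned, but a \emph{fixed} good codebook makes the transmitted $U_{\mathcal{K}}^{n}$ uniform over a finite set rather than genuinely i.i.d.; one must therefore argue (or average over the random codebook) that its induced marginal, and in particular the $Z^{n}$-marginal whose entropy $-\tfrac{1}{n}\log P(z^{n})$ must concentrate, is close enough to the product law for these bounds to remain valid. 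Once this is secured, the remaining inequalities are routine bookkeeping of the $\epsilon_{n}$ terms.
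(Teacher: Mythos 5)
Your proposal follows essentially the same route as the paper's own proof: the same joint-typicality indicator $\zeta$, the same two chain-rule expansions sandwiching $I(U_{\mathcal{K}}^{n};Z^{n})$ between $I(U_{\mathcal{K}}^{n};Z^{n}\mid\zeta)\pm 1$, the same crude bound $n\epsilon_{n}\log\|\mathcal{Z}\|$ on the atypical branch, and the same AEP squeeze to within $3n\epsilon_{n}$ on the typical branch, with one case proved explicitly and the other declared identical (your convention that $\zeta=0$ marks the typical event matches how the paper's displayed equations actually use $\zeta$, silently correcting the paper's swapped indicator definition). The codebook issue you flag as the main obstacle is resolved exactly as you suggest --- averaging over the random codebook ensemble makes the transmitted $U_{\mathcal{K}}^{n}$ exactly i.i.d.\ according to $\prod_{i}P(u_{\mathcal{K},i})$ --- a point the paper glosses over entirely, so your treatment is, if anything, the more careful one.
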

\subsection{Proof of the Converse for Theorem \ref{th2}}
Before starting the proof, we first present some useful lemmas.
\begin{lem}\label{lm1}
The secrecy sum capacity of the Gaussian MIMO MAC is upper-bounded by
\begin{IEEEeqnarray}{rl}
C_{\hbox{sum}}\leq \max_{P(\mathbf{x_{1}})P(\mathbf{x_{2}})...P(\mathbf{\mathbf{x_{K}}})} I(\mathbf{x_{1},x_{2},...,x_{K}};\mathbf{y}|\mathbf{z}),
\end{IEEEeqnarray}
where maximization is over all distributions $P(\mathbf{x_{1}})P(\mathbf{x_{2}})...P(\mathbf{\mathbf{x_{K}}})$ that satisfy the power constraint, i.e., $Tr(\mathbf{x^{\dag}}\mathbf{x})\leq P$.
\end{lem}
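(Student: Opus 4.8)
The plan is to prove this single-letter bound by the standard converse machinery for degraded wiretap-type channels, feeding in the reliability constraint through Fano's inequality and the perfect-secrecy constraint~(\ref{eq1}), and then exploiting the degradedness Markov chain $W_{\mathcal{K}}\rightarrow\mathbf{x}\rightarrow\mathbf{y}\rightarrow\mathbf{z}$ asserted above. I would begin from $nR_{\hbox{sum}}=H(W_{\mathcal{K}})$ (messages independent and uniform) and split it as $H(W_{\mathcal{K}})=I(W_{\mathcal{K}};\mathbf{y^{n}})+H(W_{\mathcal{K}}|\mathbf{y^{n}})$. Since $P_{e}^{(n)}\rightarrow 0$, Fano's inequality gives $H(W_{\mathcal{K}}|\mathbf{y^{n}})\leq n\epsilon_{n}$ with $\epsilon_{n}\rightarrow 0$, while (\ref{eq1}) gives $I(W_{\mathcal{K}};\mathbf{z^{n}})=H(W_{\mathcal{K}})-H(W_{\mathcal{K}}|\mathbf{z^{n}})\leq\epsilon H(W_{\mathcal{K}})=\epsilon nR_{\hbox{sum}}$. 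Adding and subtracting $I(W_{\mathcal{K}};\mathbf{z^{n}})$ then yields
\begin{equation}
(1-\epsilon)nR_{\hbox{sum}}\leq I(W_{\mathcal{K}};\mathbf{y^{n}})-I(W_{\mathcal{K}};\mathbf{z^{n}})+n\epsilon_{n}.
\end{equation}

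Next I would use the degradedness. Under the Markov chain $W_{\mathcal{K}}\rightarrow\mathbf{y^{n}}\rightarrow\mathbf{z^{n}}$ one has $H(W_{\mathcal{K}}|\mathbf{y^{n}},\mathbf{z^{n}})=H(W_{\mathcal{K}}|\mathbf{y^{n}})$, so $I(W_{\mathcal{K}};\mathbf{y^{n}})-I(W_{\mathcal{K}};\mathbf{z^{n}})=I(W_{\mathcal{K}};\mathbf{y^{n}}|\mathbf{z^{n}})$. Expanding $I(W_{\mathcal{K}},\mathbf{x^{n}};\mathbf{y^{n}}|\mathbf{z^{n}})$ in two ways and noting that $W_{\mathcal{K}}$ is independent of $(\mathbf{y^{n}},\mathbf{z^{n}})$ given $\mathbf{x^{n}}$ gives the data-processing step $I(W_{\mathcal{K}};\mathbf{y^{n}}|\mathbf{z^{n}})\leq I(\mathbf{x^{n}};\mathbf{y^{n}}|\mathbf{z^{n}})$. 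I would then single-letterize by writing $I(\mathbf{x^{n}};\mathbf{y^{n}}|\mathbf{z^{n}})=H(\mathbf{y^{n}}|\mathbf{z^{n}})-H(\mathbf{y^{n}}|\mathbf{x^{n}},\mathbf{z^{n}})$. The chain rule and ``conditioning reduces entropy'' bound the first term by $\sum_{i}H(\mathbf{y}_{i}|\mathbf{z}_{i})$, while memorylessness, i.e. $P(\mathbf{y^{n}},\mathbf{z^{n}}|\mathbf{x^{n}})=\prod_{i}P(\mathbf{y}_{i},\mathbf{z}_{i}|\mathbf{x}_{i})$, makes the output pairs conditionally independent across time given $\mathbf{x^{n}}$, so that $H(\mathbf{y}_{i}|\mathbf{x^{n}},\mathbf{z^{n}},\mathbf{y}^{i-1})=H(\mathbf{y}_{i}|\mathbf{x}_{i},\mathbf{z}_{i})$ and the second term equals $\sum_{i}H(\mathbf{y}_{i}|\mathbf{x}_{i},\mathbf{z}_{i})$. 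Hence $I(\mathbf{x^{n}};\mathbf{y^{n}}|\mathbf{z^{n}})\leq\sum_{i}I(\mathbf{x}_{i};\mathbf{y}_{i}|\mathbf{z}_{i})$.

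Finally I would close with a time-sharing argument. Introducing a uniform index $Q\in\{1,\dots,n\}$ and setting $\mathbf{x}=\mathbf{x}_{Q}$, $\mathbf{y}=\mathbf{y}_{Q}$, $\mathbf{z}=\mathbf{z}_{Q}$, the normalized bound reads $\frac{1}{n}\sum_{i}I(\mathbf{x}_{i};\mathbf{y}_{i}|\mathbf{z}_{i})=I(\mathbf{x};\mathbf{y}|\mathbf{z},Q)=\sum_{q}P(q)\,I(\mathbf{x};\mathbf{y}|\mathbf{z},Q=q)$. Because the messages are independent and each user's input depends only on its own message and local randomness, the per-letter inputs $\mathbf{x}_{1,i},\dots,\mathbf{x}_{K,i}$ are mutually independent, so each conditional term is evaluated under a product law $P(\mathbf{x_{1}})\cdots P(\mathbf{x_{K}})$ and is therefore at most the claimed maximum; the convex combination, whose input law also satisfies the average power constraint $Tr(\mathbf{x^{\dag}}\mathbf{x})\leq P$, inherits the same bound. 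Letting $n\rightarrow\infty$ and $\epsilon\rightarrow 0$ then gives $C_{\hbox{sum}}\leq\max_{P(\mathbf{x_{1}})\cdots P(\mathbf{x_{K}})} I(\mathbf{x_{1},x_{2},\dots,x_{K}};\mathbf{y}|\mathbf{z})$.

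The main obstacle I anticipate is the single-letterization in the third step: the term $H(\mathbf{y^{n}}|\mathbf{x^{n}},\mathbf{z^{n}})$ must collapse to $\sum_{i}H(\mathbf{y}_{i}|\mathbf{x}_{i},\mathbf{z}_{i})$ as an \emph{exact} identity, which hinges on the conditional independence of the output pairs given the full input block and must not be conflated with the looser inequality used for $H(\mathbf{y^{n}}|\mathbf{z^{n}})$. Closely related is the care needed in invoking degradedness: the construction only guarantees stochastic (not physical) degradation, so one argues on the marginals $P(\mathbf{y}|\mathbf{x})$ and $P(\mathbf{z}|\mathbf{x})$, which leave the secrecy sum capacity unchanged, and the Markov chain $W_{\mathcal{K}}\rightarrow\mathbf{y^{n}}\rightarrow\mathbf{z^{n}}$ may then be assumed without loss of generality. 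The remaining time-sharing bookkeeping and the power-constraint verification are routine.
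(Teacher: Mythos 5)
Your proposal follows essentially the same skeleton as the paper's proof: Fano plus the secrecy constraint to reach $I(W_{\mathcal{K}};\mathbf{y^{n}})-I(W_{\mathcal{K}};\mathbf{z^{n}})$, promotion to $I(W_{\mathcal{K}};\mathbf{y^{n}}|\mathbf{z^{n}})$, data processing down to $I(\mathbf{x^{n}_{\mathcal{K}}};\mathbf{y^{n}}|\mathbf{z^{n}})$, single-letterization, and time sharing. One structural difference: your invocation of degradedness is unnecessary and narrows the result. The paper obtains $I(W_{\mathcal{K}};\mathbf{y^{n}})-I(W_{\mathcal{K}};\mathbf{z^{n}})\leq I(W_{\mathcal{K}};\mathbf{y^{n}}|\mathbf{z^{n}})$ for an arbitrary (not necessarily degraded) channel directly from $I(W_{\mathcal{K}};\mathbf{y^{n}})\leq I(W_{\mathcal{K}};\mathbf{y^{n}},\mathbf{z^{n}})$ and the chain rule; no Markov chain $W_{\mathcal{K}}\rightarrow\mathbf{y^{n}}\rightarrow\mathbf{z^{n}}$ is needed, and your entire stochastic-versus-physical degradation discussion can be dropped. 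This matters because the lemma is stated for the Gaussian MIMO MAC in general, and it keeps the bound evaluated under the actual joint law of $(\mathbf{y},\mathbf{z})$ rather than under a surrogate physically degraded joint.

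The genuine gap is in your final step. The claim that ``each conditional term is evaluated under a product law and is therefore at most the claimed maximum'' is false as stated: the maximum in the lemma is over product distributions \emph{satisfying the power constraint}, whereas the per-letter laws $P_{\mathbf{x}_{1,i}}\cdots P_{\mathbf{x}_{K,i}}$ satisfy the power constraint only on average over the block; an individual time index may use power exceeding $P$, so the corresponding term $I(\mathbf{x}_{i};\mathbf{y}_{i}|\mathbf{z}_{i})$ need not be bounded by that maximum. Your fallback, that the convex combination's input law meets the average power constraint and therefore ``inherits the same bound,'' does not follow either: the quantity in hand is the average $\frac{1}{n}\sum_{i}I(\mathbf{x}_{i};\mathbf{y}_{i}|\mathbf{z}_{i})=I(\mathbf{x};\mathbf{y}|\mathbf{z},Q)$, and relating an average of mutual informations to the mutual information at the averaged input law requires a concavity argument, which you never invoke. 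This is exactly the paper's step $(g)$: $I(\mathbf{x}_{\mathcal{K}};\mathbf{y}|\mathbf{z})$ is concave in the product input law $P(\mathbf{x_{1}})\cdots P(\mathbf{x_{K}})$ (the paper cites [9, Appendix I] for this nontrivial fact), so Jensen's inequality gives $I(\mathbf{x};\mathbf{y}|\mathbf{z},Q)\leq I(\mathbf{x};\mathbf{y}|\mathbf{z})$ evaluated at the componentwise-averaged input law, which is still a product law and does satisfy $Tr(\mathbf{Q_{k}})\leq P$ by linearity of power in the distribution. Without this concavity statement your chain of inequalities breaks precisely at the time-sharing step.
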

\begin{proof}
According to Fano's inequality and the perfect secrecy constraint, we have
\begin{IEEEeqnarray}{rl}
n\sum_{k\in\mathcal{K}}R_{k}&\leq I(W_{\mathcal{K}};\mathbf{y^{n}})-I(W_{\mathcal{K}};\mathbf{z^{n}})\\ \nonumber
&\stackrel{(a)}{\leq} I(W_{\mathcal{K}};\mathbf{y^{n},z^{n}})-I(W_{\mathcal{K}};\mathbf{z^{n}})\\ \nonumber
&\stackrel{(b)}{=}I(W_{\mathcal{K}};\mathbf{y^{n}}|\mathbf{z^{n}})\\ \nonumber
&=h(\mathbf{y^{n}}|\mathbf{z^{n}})-h(\mathbf{y^{n}}|W_{\mathcal{K}},\mathbf{z^{n}})\\ \nonumber
&\stackrel{(c)}{\leq} h(\mathbf{y^{n}}|\mathbf{z^{n}})-h(\mathbf{y^{n}}|W_{\mathcal{K}},\mathbf{x^{n}_{\mathcal{K}}},\mathbf{z^{n}})\\ \nonumber
&\stackrel{(d)}{\leq} h(\mathbf{y^{n}}|\mathbf{z^{n}})-h(\mathbf{y^{n}}|\mathbf{x^{n}_{\mathcal{K}}},\mathbf{z^{n}})\\ \nonumber
&\stackrel{(e)}{\leq} h(\mathbf{y^{n}}|\mathbf{z^{n}})-\sum_{i=1}^{n}h(\mathbf{y(i)}|\mathbf{x_{\mathcal{K}}(i)},\mathbf{z(i)})\\ \nonumber
&\stackrel{(f)}{\leq} \sum_{i=1}^{n}h(\mathbf{y(i)}|\mathbf{z(i)})-h(\mathbf{y(i)}|\mathbf{x_{\mathcal{K}}(i)},\mathbf{z(i)})\\ \nonumber
&\leq nI(\mathbf{x_{\mathcal{K}}};\mathbf{y}|\mathbf{z},q)\\ \nonumber
&\stackrel{(g)}{\leq} nI(\mathbf{x_{\mathcal{K}}};\mathbf{y}|\mathbf{z}),
\end{IEEEeqnarray}
where $(a)$ and $(b)$ follows from chain rule, $(c)$ follows from the fact that conditioning decreases the differential entropy, $(d)$ follows from the Markov chain $W_{\mathcal{K}}\rightarrow (\mathbf{x_{\mathcal{K}}^{n}},\mathbf{z^{n}})\rightarrow \mathbf{y^{n}}$, $(e)$ follows from the fact that the channel is memoryless, $(f)$ is obtained by defining a time-sharing random variable $q$ that
takes values uniformly over the index set $\{1, 2, . . . , n\}$ and
defining $(\mathbf{x_{\mathcal{K}}},\mathbf{y}, \mathbf{z})$ to be the tuple of random variables that
conditioned on $q = i$, have the same joint distribution as
$(\mathbf{x_{\mathcal{K}}(i)}, \mathbf{y(i)},\mathbf{z(i)})$. Finally, $(g)$ follows from the fact
that $I(\mathbf{x_{\mathcal{K}}};\mathbf{y}|\mathbf{z})$ is concave in $P(\mathbf{x_{1}})...P(\mathbf{x_{K}})$ (see, e.g.,[],[9, Appendix I] for a proof), so that Jensen's inequality can be applied.
\end{proof}
\begin{lem}\label{lm2}
If $\mathbf{D}\mathbf{H_{k}}=\mathbf{H_{k,e}}$ for all $k\in\mathcal{K}$ and $\mathbf{DD^{\dag}}\preceq \mathbf{I}$, then the function
\begin{IEEEeqnarray}{rl} f(\mathbf{X_{1}},\mathbf{X_{2}},...,\mathbf{X_{K}})=\frac{1}{2}\log\left|\mathbf{I}+\sum_{k\in\mathcal{K}}\mathbf{H_{k}X_{k}H_{k}^{\dag}}\right|-\frac{1}{2}\log\left|\mathbf{I}+\sum_{k\in\mathcal{K}}\mathbf{H_{k,e}X_{k}H_{k,e}^{\dag}}\right|
\end{IEEEeqnarray}
is a concave function of $(\mathbf{X_{1}},...,\mathbf{X_{K}})$ for $\mathbf{X_{k}}\succeq 0$ for all $k\in\mathcal{K}$. Moreover, for $(\mathbf{X_{1}},...,\mathbf{X_{K}})$ such that $\mathbf{X_{k}}\succeq 0$ and $(\mbox{\boldmath$\Delta$}_{1},...,\mbox{\boldmath$\Delta$}_{K})$ such that $\mbox{\boldmath$\Delta$}_{k}\succeq 0$, we have
\begin{IEEEeqnarray}{rl}\label{eq12}
f(\mathbf{X_{1}},...,\mathbf{X_{K}})\leq
f(\mathbf{X_{1}}+\mbox{\boldmath$\Delta$}_{1},...,\mathbf{X_{K}}+\mbox{\boldmath$\Delta$}_{K}).
\end{IEEEeqnarray}
\end{lem}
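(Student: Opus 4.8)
The plan is to first use the degradedness hypothesis to collapse the two determinant terms into a function of a single matrix argument, and then isolate one operator inequality that drives both the concavity and the monotonicity claims. Set $\mathbf{S}=\sum_{k\in\mathcal{K}}\mathbf{H_k X_k H_k^\dag}$. The hypothesis $\mathbf{D H_k}=\mathbf{H_{k,e}}$ gives $\sum_{k}\mathbf{H_{k,e}X_kH_{k,e}^\dag}=\mathbf{D S D^\dag}$, so $f(\mathbf{X_1},\dots,\mathbf{X_K})=g(\mathbf{S})$ with
\[
g(\mathbf{S})=\tfrac{1}{2}\log\left|\mathbf{I}+\mathbf{S}\right|-\tfrac{1}{2}\log\left|\mathbf{I}+\mathbf{DSD^\dag}\right|.
\]
The map $(\mathbf{X_1},\dots,\mathbf{X_K})\mapsto\mathbf{S}$ is linear and sends the positive-semidefinite cone into itself; hence concavity of $f$ follows from concavity of $g$ in $\mathbf{S}\succeq0$, and the monotonicity bound (\ref{eq12}) follows from $g$ being nondecreasing in the Loewner order, since $\mbox{\boldmath$\Delta$}_k\succeq0$ forces $\mathbf{S}\preceq\mathbf{S}+\sum_{k}\mathbf{H_k}\mbox{\boldmath$\Delta$}_k\mathbf{H_k^\dag}$. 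Thus it suffices to study $g$ as a function of $\mathbf{S}\succeq0$ alone.

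The key technical step, and the one I expect to be the main obstacle, is the operator inequality
\[
\mathbf{C}:=\mathbf{D^\dag}(\mathbf{I}+\mathbf{DSD^\dag})^{-1}\mathbf{D}\preceq(\mathbf{I}+\mathbf{S})^{-1},
\]
which is precisely where the degradedness $\mathbf{DD^\dag}\preceq\mathbf{I}$ must enter (note that $g$ is a difference of two concave $\log\det$ terms, so concavity is not automatic). I would prove it by a completion-of-squares argument using the variational identity $v^\dag\mathbf{M}^{-1}v=\max_x(2x^\dag v-x^\dag\mathbf{M}x)$ valid for $\mathbf{M}\succ0$. Applying it to the right-hand side with $\mathbf{M}=\mathbf{I}+\mathbf{DSD^\dag}$ and argument $\mathbf{D}v$, and substituting $x=\mathbf{D^\dag}y$, the objective becomes $2x^\dag v-y^\dag y-x^\dag\mathbf{S}x$; the degradedness gives $y^\dag y\ge y^\dag\mathbf{DD^\dag}y=x^\dag x$, so this is bounded above by $2x^\dag v-x^\dag(\mathbf{I}+\mathbf{S})x$. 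Maximizing the latter over all $x$ yields $v^\dag(\mathbf{I}+\mathbf{S})^{-1}v$, whence $v^\dag\mathbf{C}v\le v^\dag(\mathbf{I}+\mathbf{S})^{-1}v$ for every $v$. (The push-through identity $(\mathbf{I}+\mathbf{DSD^\dag})^{-1}\mathbf{D}=\mathbf{D}(\mathbf{I}+\mathbf{SD^\dag D})^{-1}$ offers an equivalent purely algebraic route.)

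Granting this inequality, monotonicity is immediate. Along $\mathbf{S}(t)=\mathbf{S}+t\mathbf{P}$ with $\mathbf{P}=\sum_k\mathbf{H_k}\mbox{\boldmath$\Delta$}_k\mathbf{H_k^\dag}\succeq0$ and $t\in[0,1]$, one computes
\[
\tfrac{d}{dt}\,g(\mathbf{S}(t))=\tfrac{1}{2}\,Tr\big[((\mathbf{I}+\mathbf{S}(t))^{-1}-\mathbf{C}(t))\,\mathbf{P}\big]\ge0,
\]
because $(\mathbf{I}+\mathbf{S}(t))^{-1}-\mathbf{C}(t)\succeq0$ and $\mathbf{P}\succeq0$ make the trace of a product of two positive-semidefinite matrices nonnegative; integrating over $t$ gives (\ref{eq12}).

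For concavity I would restrict $g$ to a line $\mathbf{S}(t)=\mathbf{S_0}+t\mbox{\boldmath$\Delta$}$ with $\mbox{\boldmath$\Delta$}=\mbox{\boldmath$\Delta$}^\dag$ and show $\frac{d^2}{dt^2}g\le0$. Using $\frac{d^2}{dt^2}\log|\mathbf{I}+\mathbf{M}(t)|=-Tr[(\mathbf{I}+\mathbf{M})^{-1}\dot{\mathbf{M}}(\mathbf{I}+\mathbf{M})^{-1}\dot{\mathbf{M}}]$ for both determinant terms together with the cyclic property of the trace, the claim reduces, at each $t$, to
\[
Tr[\mathbf{C}\,\mbox{\boldmath$\Delta$}\,\mathbf{C}\,\mbox{\boldmath$\Delta$}]\le Tr[\mathbf{A}^{-1}\mbox{\boldmath$\Delta$}\,\mathbf{A}^{-1}\mbox{\boldmath$\Delta$}],\qquad \mathbf{A}=\mathbf{I}+\mathbf{S}(t),
\]
with the same $\mathbf{C}$ as above. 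The final ingredient is an elementary monotonicity of the quadratic form $\mathbf{M}\mapsto Tr[\mathbf{M}\mbox{\boldmath$\Delta$}\mathbf{M}\mbox{\boldmath$\Delta$}]$ on the semidefinite cone: writing $\mathbf{A}^{-1}=\mathbf{C}+\mathbf{R}$ with $\mathbf{R}\succeq0$ (the inequality of the second paragraph), expansion gives
\[
Tr[\mathbf{A}^{-1}\mbox{\boldmath$\Delta$}\,\mathbf{A}^{-1}\mbox{\boldmath$\Delta$}]-Tr[\mathbf{C}\mbox{\boldmath$\Delta$}\,\mathbf{C}\mbox{\boldmath$\Delta$}]=2\,Tr[\mathbf{G}\mathbf{G}^\dag]+Tr[\mathbf{R}^{1/2}\mbox{\boldmath$\Delta$}\mathbf{R}\mbox{\boldmath$\Delta$}\mathbf{R}^{1/2}]\ge0,
\]
where $\mathbf{G}=\mathbf{C}^{1/2}\mbox{\boldmath$\Delta$}\mathbf{R}^{1/2}$, both terms being nonnegative as traces of $\mathbf{H}\mathbf{H}^\dag$-type matrices. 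This establishes $\frac{d^2}{dt^2}g\le0$ on every line through the cone, hence concavity. The crux throughout is the single operator inequality $\mathbf{C}\preceq(\mathbf{I}+\mathbf{S})^{-1}$; once it is in hand, both assertions of the lemma are routine trace bookkeeping.
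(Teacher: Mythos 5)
Your proof is correct, but it follows a genuinely different route from the paper's. The paper also collapses $f$ to a function of the single matrix $\mathbf{A}=\mathbf{I}+\mathbf{S}$, where $\mathbf{S}=\sum_{k\in\mathcal{K}}\mathbf{H_{k}X_{k}H_{k}^{\dag}}$ is your notation, but it does so through the determinant identity $\left|\mathbf{I}+\mathbf{AB}\right|=\left|\mathbf{I}+\mathbf{BA}\right|$, rewriting $f$ as $\frac{1}{2}\log\left(|\mathbf{A}|/(|\mathbf{K}+\mathbf{A}|\,|\mathbf{D^{\dag}D}|)\right)$ with $\mathbf{K}=(\mathbf{D^{\dag}D})^{-1}-\mathbf{I}\succeq 0$; concavity is then quoted from a cited lemma (concavity of $\mathbf{A}\mapsto\log(|\mathbf{A}|/|\mathbf{K}+\mathbf{A}|)$ for fixed $\mathbf{K}\succeq 0$), and the monotonicity (\ref{eq12}) is quoted from a cited determinant-ratio inequality $|\mathbf{B}|/|\mathbf{A}+\mathbf{B}|\leq|\mathbf{B}+\mbox{\boldmath$\Delta$}|/|\mathbf{A}+\mathbf{B}+\mbox{\boldmath$\Delta$}|$. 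You instead hang both claims on the single operator inequality $\mathbf{D^{\dag}}(\mathbf{I}+\mathbf{DSD^{\dag}})^{-1}\mathbf{D}\preceq(\mathbf{I}+\mathbf{S})^{-1}$, which your variational argument proves correctly: the substitution $x=\mathbf{D^{\dag}}y$ and the bound $y^{\dag}y\geq x^{\dag}x$ from $\mathbf{DD^{\dag}}\preceq\mathbf{I}$ are exactly where degradedness enters, and enlarging the feasible set from the range of $\mathbf{D^{\dag}}$ to all $x$ only strengthens the upper bound, which is the direction you need. Your subsequent first- and second-derivative trace computations are also valid, including the final expansion with $\mathbf{A}^{-1}=\mathbf{C}+\mathbf{R}$, $\mathbf{R}\succeq 0$, whose cross term and square term are both traces of matrices of the form $\mathbf{G}\mathbf{G}^{\dag}$. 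What your route buys: it is fully self-contained (no appeal to external results), and it never forms $(\mathbf{D^{\dag}D})^{-1}$ --- the paper's rewriting tacitly assumes $\mathbf{D^{\dag}D}$ is invertible, a restriction your argument removes, so your proof also covers rank-deficient or rectangular degrading matrices $\mathbf{D}$. What the paper's route buys: brevity, and the closed form (\ref{eq13}) produced by its rewriting is reused in the main text (it is invoked to conclude that the degraded channel has zero S-DoF). One small caveat: over the complex field your variational identity should read $v^{\dag}\mathbf{M}^{-1}v=\max_{x}\,(x^{\dag}v+v^{\dag}x-x^{\dag}\mathbf{M}x)$ rather than with $2x^{\dag}v$; for the paper's real-valued channel model the form you wrote is fine.
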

\begin{proof}
Using the degradedness property of $\mathbf{D}\mathbf{H_{k}}=\mathbf{H_{k,e}}$, the function $f(.)$ can be re-written as follows:
\begin{IEEEeqnarray}{rl} f(\mathbf{X_{1}},\mathbf{X_{2}},...,\mathbf{X_{K}})&=\frac{1}{2}\log\left|\mathbf{I}+\sum_{k\in\mathcal{K}}\mathbf{H_{k}X_{k}H_{k}^{\dag}}\right|-\frac{1}{2}\log\left|\mathbf{I}+\sum_{k\in\mathcal{K}}\mathbf{H_{k,e}X_{k}H_{k,e}^{\dag}}\right|\\ \nonumber &= \frac{1}{2}\log\left|\mathbf{I}+\sum_{k\in\mathcal{K}}\mathbf{H_{k}X_{k}H_{k}^{\dag}}\right|-\frac{1}{2}\log\left|\mathbf{I}+\sum_{k\in\mathcal{K}}\mathbf{DH_{k}X_{k}H_{k}^{\dag}D^{\dag}}\right|\\ \label{eq13} &\stackrel{(a)}{=}\frac{1}{2}\log\frac{\left|\mathbf{I}+\sum_{k\in\mathcal{K}}\mathbf{H_{k}X_{k}H_{k}^{\dag}}\right|}{\left|\left[\left(\mathbf{D^{\dag}D}\right)^{-1}-\mathbf{I}\right]+\left[\mathbf{I}+\sum_{k\in\mathcal{K}}\mathbf{H_{k}X_{k}H_{k}^{\dag}}\right]\right|\left|\mathbf{D^{\dag}D}\right|},
\end{IEEEeqnarray}
where $(a)$ follows from the fact that $\left|\mathbf{I}+\mathbf{AB}\right|=\left|\mathbf{I}+\mathbf{BA}\right|$. According to [23, Lemma II.3], this function is concave with regard to $\mathbf{I}+\sum_{k\in\mathcal{K}}\mathbf{H_{k}X_{k}H_{k}^{\dag}}$, and is therefore concave with regard to $(\mathbf{X_{1}},\mathbf{X_{2}},...,\mathbf{X_{K}})$.

To prove the property of (\ref{eq12}), note that for any
$\mathbf{A}\succeq 0$, $\mbox{\boldmath$\Delta$}\succeq 0$ and
$\mathbf{B}\succ 0$, we have the following property [24,p.3942]:
\begin{IEEEeqnarray}{rl}
\frac{|\mathbf{B}|}{|\mathbf{A}+\mathbf{B}|}\leq \frac{|\mathbf{B}+\mbox{\boldmath$\Delta$}|}{|\mathbf{A}+\mathbf{B}+\mbox{\boldmath$\Delta$}|}.
\end{IEEEeqnarray}
We choose
$\mbox{\boldmath$\Delta$}=\sum_{k\in\mathcal{K}}\mathbf{H_{k}}\mbox{\boldmath$\Delta$}_{k}\mathbf{H_{k}^{\dag}}$
and apply the above property to $(\ref{eq13})$. We thus obtain,
\begin{IEEEeqnarray}{rl}
f(\mathbf{X_{1}},\mathbf{X_{2}},...,\mathbf{X_{K}})&=\frac{1}{2}\log\frac{\left|\mathbf{I}+\sum_{k\in\mathcal{K}}\mathbf{H_{k}X_{k}H_{k}^{\dag}}\right|}{\left|\left[\left(\mathbf{D^{\dag}D}\right)^{-1}-\mathbf{I}\right]+\left[\mathbf{I}+\sum_{k\in\mathcal{K}}\mathbf{H_{k}X_{k}H_{k}^{\dag}}\right]\right|\left|\mathbf{D^{\dag}D}\right|}\\ \nonumber &\leq \frac{1}{2}\log\frac{\left|\mathbf{I}+\sum_{k\in\mathcal{K}}\mathbf{H_{k}\left(X_{k}+\mbox{\boldmath$\Delta$}_{k}\right)H_{k}^{\dag}}\right|}{\left|\left[\left(\mathbf{D^{\dag}D}\right)^{-1}-\mathbf{I}\right]+\left[\mathbf{I}+\sum_{k\in\mathcal{K}}\mathbf{H_{k}\left(X_{k}+\mbox{\boldmath$\Delta$}_{k}\right)H_{k}^{\dag}}\right]\right|\left|\mathbf{D^{\dag}D}\right|}\\ \nonumber &=f(\mathbf{X_{1}}+\mbox{\boldmath$\Delta$}_{1},...,\mathbf{X_{K}}+\mbox{\boldmath$\Delta$}_{K}).
\end{IEEEeqnarray}
\end{proof}
To prove the converse part, we first start with Lemma \ref{lm1} to
bound the sum rate as follows:
\begin{IEEEeqnarray}{rl}
\sum_{k\in\mathcal{K}}R_{k}&\leq I(\mathbf{x_{\mathcal{K}}};\mathbf{y}|\mathbf{z})\\ \nonumber
&=h(\mathbf{y}|\mathbf{z})-h(\mathbf{y}|\mathbf{x_{\mathcal{K}},z})\\ \nonumber
&=h(\mathbf{y}|\mathbf{z})-h(\mathbf{n_{1}})\\ \nonumber
&\stackrel{(a)}{=}\frac{1}{2}\log\left|\mathbf{I}+\sum_{k\in\mathcal{K}}\mathbf{H_{k}K_{x_{k}}H_{k}^{\dag}}\right|-\frac{1}{2}\log\left|\mathbf{I}+\sum_{k\in\mathcal{K}}\mathbf{H_{k,e}K_{x_{k}}H_{k,e}^{\dag}}\right|\\ \nonumber &\stackrel{(b)}{\leq}\frac{1}{2}\log\left|\mathbf{I}+\sum_{k\in\mathcal{K}}\mathbf{H_{k}Q_{k}H_{k}^{\dag}}\right|-\frac{1}{2}\log\left|\mathbf{I}+\sum_{k\in\mathcal{K}}\mathbf{H_{k,e}Q_{k}H_{k,e}^{\dag}}\right|,
\end{IEEEeqnarray}
where $(a)$ follows from the fact that $h(\mathbf{y}|\mathbf{z})$ is maximized by jointly Gaussian $\mathbf{y}$ and $\mathbf{z}$ for fixed covariance matrix $\mathbf{Q_{y,z}}$ and $(b)$ follows from the degradedness assumption and therefore concavity and monotonicity properties given in Lemma \ref{lm2} and the fact that $\mathbf{K_{x_{k}}}\preceq \mathbf{Q_{k}}$.
\subsection{Proof of Lemma \ref{lm3}}
We prove this lemma by induction on $L$. To show the lemma for $L=0$, it is sufficient to prove that the equation
\begin{IEEEeqnarray}{rl}\label{eq16}
n(b_{0}-\widetilde{b}_{0})+m(b^{'}_{0}-\widetilde{b}^{'}_{0})=0,
\end{IEEEeqnarray}
has no nontrivial solution when
$b_{0},b^{'}_{0},\widetilde{b}_{0},\widetilde{b}^{'}_{0}\in\{0,1,2,...,a-1\}$.
The necessary and sufficient conditions for the equation
(\ref{eq16}) are that $b^{'}_{0}-\widetilde{b}^{'}_{0}$ is dividable
by $n$ \textbf{and} $b_{0}-\widetilde{b}_{0}$ is dividable by $n$.
We show that one of these conditions does not hold for all choices
of Table I.

\emph{Case $1$}: In this case $a=n$. Following the fact that
$-(n-1)\leq b^{'}_{0}-\widetilde{b}^{'}_{0}\leq n-1$, it is easy to
deduce that $n \nmid (b^{'}_{0}-\widetilde{b}^{'}_{0})$.,

\emph{Case $2$}: In this case $a=s+1$ where $m=2s+1$. Following the
fact that $-(2s+1)\leq b_{0}-\widetilde{b}_{0}\leq 2s+1$, it is easy
to deduce that $m \nmid b_{0}-\widetilde{b}_{0}$.

\emph{Case $3$}: In this case $a=s$ where $m=2s$. Following the fact
that $-2s\leq b_{0}-\widetilde{b}_{0}\leq 2s$, it is easy to show
that $m \nmid b_{0}-\widetilde{b}_{0}$.

Now assume that property $\Gamma$ holds for $L-1$. We need to show that the equation
\begin{IEEEeqnarray}{rl}
\frac{A}{m}\sum_{l=0}^{L-1}\left(n(b_{l}-\widetilde{b}_{l})+m(b^{'}_{l}-\widetilde{b}^{'}_{l})\right)W^{l}=0,
\end{IEEEeqnarray}
has no nontrivial solution. Equivalently, this equation can be written as follows:
\begin{IEEEeqnarray}{rl}\label{eq17}
n(b_{0}-\widetilde{b}_{0})+m(b^{'}_{0}-\widetilde{b}^{'}_{0})=W\sum_{l=0}^{L-2}\left(n(b_{l+1}-\widetilde{b}_{l+1})+m(b^{'}_{l+1}-\widetilde{b}^{'}_{l+1})\right)W^{l}.
\end{IEEEeqnarray}
We prove that the above equation has no nontrivial solution in two
steps. First, assume that the right hand side of (\ref{eq17}) is
zero. This equation therefore reduces to
\begin{IEEEeqnarray}{rl}
n(b_{0}-\widetilde{b}_{0})+m(b^{'}_{0}-\widetilde{b}^{'}_{0})=0,
\end{IEEEeqnarray}
which we have already shown that has no nontrivial solution for all
three cases.

Secondly, assume that the right side of (\ref{eq17}) is non-zero.
The equation (\ref{eq17}) can therefore be written as follows:
\begin{IEEEeqnarray}{rl}\label{eq18}
n(b_{0}-\widetilde{b}_{0})+m(b^{'}_{0}-\widetilde{b}^{'}_{0})=cW,
\end{IEEEeqnarray}
where $c\in\mathbb{Z}$ and $c\neq 0$. We need to prove that equation (\ref{eq18}) has no nontrivial solution for all three cases.

\emph{Case $1$}: In this case $W=n(2n-1)$ and $n$ divides
$n(b_{0}-\widetilde{b}_{0})$ and $cW$ as well. However, as $(m,n)=1$
and $-(n-1)\leq b_{0}-\widetilde{b}_{0}\leq n-1$, the equation
(\ref{eq18}) has a solution iff $b^{'}_{0}=\widetilde{b}^{'}_{0}$
which is a contradiction with the fact that
$n|b_{0}-\widetilde{b}_{0}|<n(n-1)<|c|n(2n-1)=|c|W$.

\emph{Case $2$}: In this case $W=(s+1)(2s+1)$, $n=s+1$ and $m=2s+1$.
Thus, $2s+1$ divides both $m(b^{'}_{0}-\widetilde{b}^{'}_{0})$ and
$cW$. Since $(2n,m=2s+1)=1$ and $-2s\leq b_{0}-\widetilde{b}_{0}\leq
2s$, therefore, $2s+1$ cannot divide $n(b_{0}-\widetilde{b}_{0})$.
Hence, equation (\ref{eq18}) has a solution iff
$b_{0}=\widetilde{b}_{0}$ which contradicts the fact that
$m|b^{'}_{0}-\widetilde{b}^{'}_{0}|<|c|W$.

\emph{Case $3$}: In this case $W=2s^{2}-n$, $a=s$, $2n<m$ and
$m=2s$. We have
\begin{IEEEeqnarray}{rl}
|n(b_{0}-\widetilde{b}_{0})+m(b^{'}_{0}-\widetilde{b}^{'}_{0})|&<m|b_{0}-\widetilde{b}_{0}+b^{'}_{0}-\widetilde{b}^{'}_{0}|\\
\nonumber &\leq2m(a-1)\\ \nonumber &=4s(s-1)\\ \nonumber &<2W
\end{IEEEeqnarray}
 and therefore, it suffices to assume $c=1$. Substituting
 $W=2s^{2}-n$ in (\ref{eq18}), we have the following equation:
 \begin{IEEEeqnarray}{rl}\label{eq19}
n(b_{0}-\widetilde{b}_{0}+1)+2s(b^{'}_{0}-\widetilde{b}^{'}_{0})=2s^{2}.
 \end{IEEEeqnarray}
Obviously, $2s$ divides $2s(b^{'}_{0}-\widetilde{b}^{'}_{0})$ and
$=2s^{2}$. However, since $(2s,n)=1$ and $-(2s-1)\leq
b_{0}-\widetilde{b}_{0}\leq 2s-1$, equation (\ref{eq19}) has a
solution iff $b_{0}=\widetilde{b}_{0}-1$ which is impossible due to
the fact $2s|b^{'}_{0}-\widetilde{b}^{'}_{0}|<2s^{2}$. This
completes the proof.

\end{document}